\documentclass{article}
\setcounter{secnumdepth}{5}
\setcounter{tocdepth}{5}
\title{Corpus-compressed Streaming and the Spotify Problem}
\author{Aubrey Alston (ada2145@columbia.edu)}

\usepackage[left=2.5cm,top=3cm,right=2.5cm,bottom=3cm,bindingoffset=0.5cm]{geometry}

\makeatletter
\renewcommand\paragraph{\@startsection{paragraph}{4}{\z@}%
            {-2.5ex\@plus -1ex \@minus -.25ex}%
            {1.25ex \@plus .25ex}%
            {\normalfont\normalsize\bfseries}}
\makeatother
\setcounter{secnumdepth}{4} 
\setcounter{tocdepth}{4}    

\usepackage{fancyhdr}
\pagestyle{fancy}

\lhead{\textbf{Advanced Algorithms Project Report, Spring 2017} \\ Aubrey Alston (ada2145@columbia.edu) }
\chead{}
\rhead{}
\lfoot{}
\cfoot{Page \thepage }
\rfoot{}
\date{}

\usepackage{amsmath}
\usepackage{amssymb}
\usepackage{MnSymbol}
\usepackage{graphicx}
\usepackage{float}
\usepackage{amsthm}
\usepackage{algorithm}
\usepackage{algpseudocode}
\usepackage{algorithmicx}
\usepackage{program}
\usepackage{caption}

\theoremstyle{definition}
\newtheorem{definition}{Definition}[section]
\newtheorem{theorem}{Theorem}[section]
\newtheorem{lemma}{Lemma}[section]

\begin{document}

\maketitle

\section{Overview}

In this work, we describe a problem which we refer to as the \textbf{Spotify problem} and explore a 
potential solution in the form of what we call \textbf{corpus-compressed streaming schemes}.

Spotify is a digital music streaming service that gives users access to music on demand.
One of its most prominent features is the `playlist' feature: Spotify (and other Spotify users)
maintain and update lists of songs that other users may listen to at any time.  The process of listening 
to these playlists is generally seamless on home networks due to the lack of significant bandwidth 
limitations, but this is not necessarily the case for some mobile users whose mobile broadband networks are 
less than reliable.  The Spotify problem simply names the problem of improving the reliability of streaming playback for users who spend time in these constrained networks.

More generally, the Spotify problem applies in any number of practical domains where devices 
may be periodically expected to experience degraded communication or storage capacity.  One 
obvious solution candidate which comes to mind immediately is standard compression.  Though obviously 
applicable, standard compression does not in any way exploit all characteristics 
of the problem; in particular, standard compression is oblivious to the fact that a decoder 
has a period of virtually unrestrained communication.  Towards 
applying compression in a manner which attempts to stretch the benefit of periods of higher communication 
capacity into periods of restricted capacity, we introduce as a solution the idea of a 
\textbf{corpus-compressed streaming scheme}.  
\newline\newline
This report begins with a formal definition of a corpus-compressed streaming scheme.  Following a 
discussion of how such schemes apply to the Spotify problem, we then give a survey of specific
corpus-compressed scheming schemes guided by an exploration of different measures of description 
complexity within the Chomsky hierarchy of languages. \footnote{This work is the project report produced for the Advanced Algorithms course 
at Columbia University.}
\newline\newline

\tableofcontents

\section{Definition of a Corpus-compressed Streaming Scheme}

We define a \textbf{corpus-compressed streaming scheme} in a setting consisting of two parties, an 
encoder \textit{A} and a decoder \textit{B}.  \textit{A} holds a finite set of 
$n$ distinct binary strings $C=\{c_1,...,c_n\}$, called a \textit{corpus set}, 
in which \textit{B} is interested.  Following a setup phase, \textit{A} has need to convey some 
stream of strings within this corpus set $c_{i_1},c_{i_2},...$ to \textit{B} with 
the following constraints:

\begin{itemize}
\item{During the setup phase, \textit{A} and \textit{B} may communicate freely.}
\item{During the stream, communication between \textit{A} and \textit{B} is costly.}
\item{During the entire exchange, memory is costly for \textit{B}, meaning that 
\textit{B} desires not to store the entire corpus set.}
\item{There is no characterization of the stream beyond the fact that all strings come from 
the corpus set.}
\end{itemize}

\noindent We  give a general definition of a corpus-compressed streaming scheme as well as a 
formal parameterization of the function of such a scheme:  

\theoremstyle{definition}
\begin{definition}{\textit{corpus-compressed streaming scheme}}
Let $C$ be a corpus set containing $n$ distinct strings.  A \textit{corpus-compressed streaming scheme}
(CCSS), is a triplet of algorithms $(Construct(C), Encode(D,x), Decode(D,\hat{x}))$ respectively 
defined as follows:
\begin{enumerate}
\item{$Construct(C)$ takes as input a corpus set and returns a \textbf{schematic object} 
$D$ along with (potentially empty) auxiliary output $A$.  A schematic object is a data structure used to encode and decode elements of the corpus set.}
\item{$Encode(D,A,x)$ takes as input a valid schematic $D$ and associated auxiliary input $A$ 
and a string $x \in C$ and returns 
either an encoding of $x$ or $\perp$ if $x$ is invalid with respect to $D$.}
\item{$Decode(D,\hat{x})$ takes as input a valid schematic $D$ and an encoding $\hat{x}$ and 
returns some string $x \in C$ (where $D$ is a valid schematic for $C$).}
\end{enumerate}
\end{definition}

\theoremstyle{definition}
\begin{definition}{$\delta$-\textit{minimal} $p(n,z)-\epsilon$-\textit{corpus-compressed streaming scheme}}
Let $C$ be a corpus set containing $n$ strings; let $z = \max_{c_i \in C} \lvert c_i \rvert$ be 
the length of the longest string in $C$.  A $\delta$-\textit{minimal} $p(n,z)-\epsilon$-\textit{corpus-compressed streaming scheme}, shorthand $(\delta, p(n,z), \epsilon)$-CCSS, is a corpus-compressed streaming 
scheme $(Construct, Encode, Decode)$ with the following properties:

\begin{enumerate}
\item{\textit{Compression: } For $D = Construct(C)$,
\[\forall c_i \in C, \lvert Encode(D, c_i) \rvert \leq p(n,z)\]

\noindent As defined, the parameter $p(n,z)$ gives the maximum length of the \textbf{streaming code} 
for any individual $c_i \in C$ as a function of the size of the corpus and the maximum element size.}
\item{\textit{Correctness: } For $D = Construct(C)$,

\[ \forall c_i \in C, \frac{ham(Decode(\mathcal{D},Encode(\mathcal{D}, A, c_i)), c_i)}{\lvert c_i \rvert} \leq \epsilon \]

\noindent and where $ham(\cdot, \cdot)$ denotes Hamming distance.  As defined, the parameter $\epsilon$ 
parameterizes the maximum reconstruction error of the scheme.}
\item{\textit{Minimality: } For $D = Construct(C)$, 

\[ \frac{\lvert \mathcal{D} \rvert}{\lvert \mathcal{D}* \rvert} \leq 1 + \delta \] 

\noindent where $\mathcal{D}*$ is the minimum-length satisfactory schematic among some restriction of possible objects $\mathcal{D}$.  As defined, $\delta$ defines the factor by which the output of $Construct(D)$ is 
off from some definition of minimal.  We will see in later sections that this notion of minimality has 
some interesting connections to concepts in algorithmic information theory.}
\end{enumerate}

\end{definition}

\noindent Because this is a first presentation of corpus-compressed streaming, this work 
will explore only schemes guaranteeing exact reconstruction (in other words, we fix $\epsilon$ to 
be 0 in all explorations).  We additionally provide the following trivial lower bound on the 
minimum achievable streaming code length of any CCSS which we will use in our exploration:

\begin{theorem}
\label{CodeLenLB}
Let $S$ be a valid $(\cdot, p(n,z), 0)$ corpus-compressed scheming scheme.  The 
maximum streaming code length for any corpus set must obey the inequality $p(n,z) \geq \log (n+1) - 1$.
\end{theorem}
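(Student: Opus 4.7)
The plan is to derive the bound by a pigeonhole/counting argument, exploiting that $\epsilon=0$ forces exact reconstruction. First, I would fix an arbitrary corpus $C = \{c_1,\dots,c_n\}$ of $n$ distinct strings together with the schematic $D = Construct(C)$ and auxiliary $A$ that the scheme produces. Because the correctness clause with $\epsilon=0$ requires $Decode(D, Encode(D,A,c_i)) = c_i$ for every $c_i \in C$, the map $c_i \mapsto Encode(D,A,c_i)$ must be injective on $C$: if two distinct corpus strings $c_i \neq c_j$ shared the same encoding $\hat{x}$, then $Decode(D,\hat{x})$ could return at most one of them, contradicting correctness on the other.

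Next, I would bound the size of the range of $Encode$. By the compression clause, every $Encode(D,A,c_i)$ is a binary string of length at most $p(n,z)$. The number of binary strings of length at most $\ell$ is
\[
\sum_{k=0}^{\ell} 2^k \;=\; 2^{\ell+1} - 1.
\]
Injectivity therefore demands $2^{p(n,z)+1} - 1 \geq n$, which rearranges to $p(n,z) \geq \log(n+1) - 1$, as claimed. Since $C$ was arbitrary, the inequality holds for every valid corpus and hence as a lower bound on $p(n,z)$ over all inputs to the scheme.

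The only mildly delicate point, and the step I would be most careful about, is making sure that the argument is not undermined by the auxiliary channel $A$ or by the fact that decoding takes $D$ as an additional argument. The key observation is that for a \emph{fixed} corpus the schematic $D$ is also fixed, so $Decode(D,\cdot)$ is a single deterministic function on encodings; the injectivity argument above refers to codewords \emph{with respect to this fixed $D$}, and $A$ does not appear in the signature of $Decode$, so it cannot be used to disambiguate collisions. Once this is noted, the rest is just the counting step above.
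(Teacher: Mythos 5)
Your proposal is correct and follows essentially the same counting/pigeonhole argument as the paper: bound the number of available codewords of length at most $p(n,z)$ and compare against the $n$ distinct strings that must be encoded injectively under $0$-correctness. You make the injectivity step and the role of the fixed schematic $D$ explicit where the paper leaves them implicit, but the substance is identical.
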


\begin{proof}
Assume not.  Assume $p(n,z) < \log (n+1) - 1$.  Even if the scheme makes use of variable length 
streaming codes, the scheme may only encode

\begin{align*}
& < \sum_{i=1}^{\log (n+1)-1} 2^i \\
& < 2^{\log (n+1) - 1 + 1} - 1 \\
& < n + 1 - 1\\
& < n
\end{align*}

\noindent distinct strings.  Since $C$ includes $n$ distinct strings but can encode 
only less than that number, we conclude that $S$ cannot possibly be $\epsilon=0$-correct.$\square$.
\end{proof}

\subsection{Application to the Spotify Problem}

We see immediately that the existence of a $(\delta, p(n,z), \epsilon)$-CCSS with reasonably 
sized schematics yields an effective solution to the Spotify problem.  For the sake of illustration,
say that we have a $(\delta, p(n,z), \epsilon)$-CCSS, $(Construct, Encode, Decode)$.  
In the setting of the Spotify problem, we may apply the scheme in the following straight-forward manner:

\begin{enumerate}
\item{Before leaving his or her home network, the user indicates to Spotify that he or she wishes to 
listen to playlist $P=\{s_1,...,s_n\}$.}
\item{Spotify then treats the playlist of songs as a corpus set and sends $D = Construct(P)$ to the user's device.}
\item{The user leaves his or her home network, entering a bandwidth-constrained mobile network.}
\item{In perpetuity, then, when the user requests song $i$, Spotify now sends $Encode(D,s_i)$ (of length 
less than or equal to $p(n,z)$) rather than the entirety of $s_i$.}
\end{enumerate}

Consider the effect such an application would have on the music streaming process.  In essence, the 
user ultimately utilizes his or her temporarily unconstrained bandwidth in the second step to 
receive a reasonably sized data structure that will allow him to stream music at a reduced 
cost in perpetuity; note that by choosing $p(n,z)$ as conservatively as $\frac{z}{r}$, where mobile 
bandwidth is $\frac{1}{r}$-fraction of home bandwidth (often, $r < 2$), 
we would negate any difference in playback he or she might otherwise observe, thus solving the problem.   

\section{Corpus-compressed Streaming Schemes: Concrete Constructions using 
Regular Output Automata}

In this section, we attempt to design corpus-compressed streaming schemes by taking 
inspiration from elements of algorithmic information theory.  Among these elements is the notion of the Kolmogorov complexity of an object, 
one of the most prevalent ideas in algorithmic information theory \cite{CondKolmog}, defined as the length of the shortest Turing machine description which produces said object.

For our purposes, we both project this idea onto the notion of a corpus-compressed streaming scheme 
and direct our analysis to consider another segment of the 
Chomsky hierarchy of languages.  In particular, where Kolmogorov complexity is interested 
in the length of the shortest Turing machine which outputs a string $x$, we would be interested 
in the shortest general output Turing machine $M$ which has the following property for a corpus 
set $C$:

\[ \forall c_i \in C, \exists e_i, \lvert e_i \rvert \leq p(n,z), M(e_i) = c_i \]

We make two observations and derive one question which guides our exploration in this section.
We first note that the decision variant of determining Kolmogorov complexity 
is undecidable, and so there is no algorithmic solution (in the way of $Construct$) capable 
of constructing $M$ given $C$.  We also note that output Turing machines correspond to the 
most encompassing point in the Chomsky hierarchy of languages.  These observations lead 
us to ask the following question: what if instead we restrict schematics to output automata 
corresponding to less encompassing points in the hierarchy?

This question guides the CCSS constructions we derive.  This work 
begins to answer this question by considering corpus-compressed streaming schemes with schematics restricted 
to the least encompassing point in the Chomsky hierarchy:  that of regular languages.
A regular language may be defined as a language which may be recognized by a finite state 
machine; as we are interested in machines with output, in this work we will consider the schemes
we may derive when we restrict schematics to the set of finite state machines with per-state output,
known as Moore machines.  More specifically, we first consider schemes with schematics restricted to 
Moore machines whose underlying acyclic graphs are acyclic and then use these results to make connections 
to the general (cyclic) case.

\subsection{Formalizing our Restriction}

In this section, we are interested in schemes $(Construct, Encode, Decode)$ where 
the output of $Construct(C)$ is a Moore machine defined by the tuple 
$(S,S_0,\Sigma,\Gamma,T,G)$ with the following properties:

\begin{enumerate}
\item{$S_0$ is the unique start state of the machine.}
\item{$\Sigma$, the input alphabet, is the set $\{0,1,\perp\}$.  
$\perp$ is a special end-of-input symbol that is read only at the end of every string.}
\item{$\Gamma = \Sigma$ is the output alphabet.  
In the case of the output alphabet, $\perp$ is a special 
blank symbol which only ever occurs at a starting or final state.}
\item{$T:S\times \Sigma \Rightarrow S$ is a deterministic transition function mapping 
states to successor states given an input symbol.  With respect to 
the transition function of a Moore machine, we use the convention of a 
\textbf{unconditional transition}, which is a transition which is taken 
regardless of whether or not the next input symbol is 0 or 1.  Absent the 
presence of an explicit $\perp$-transition, this transition is taken 
even if there is no next input symbol.}
\item{$G$ is an output function mapping states to their outputs.}
\end{enumerate}

The output of $Construct$ must satisfy the further stipulation that a 
$(\delta, p(n,z), \epsilon)$-CCSS maintain $p(n,z)$-compression 
and $\epsilon$-correctness for the following fixed, universal $Decode$ procedure 
utilized by schemes under this restriction: 

\begin{algorithm}[H]
\caption{Fixed Decoding Procedure}\label{AMMDecode}
\begin{algorithmic}[1]
\Procedure{$Decode$}{$D=(S,S_0,\Sigma,\Gamma,T,G), x$}
\State Beginning at $S_0$, run $D$ on input $x$.  If ever there is a next input symbol but no successor state, return $\perp$.
\State \textbf{return} the sequence of $0$ and $1$ outputs of $D$.
\EndProcedure
\end{algorithmic}
\end{algorithm}

\noindent (Note: while this restriction does not necessarily stipulate a 
universal $Encode$ procedure, the requirement remains that it must exist 
and be efficiently computable.)

Under this restriction of schematics, we define $\delta$-minimality 
with respect to the number of states in a machine.  For a given corpus set $C$,
the minimal schematic is the Moore machine with the smallest number of states of any Moore machine 
satisfying the stated requirements.

In the remainder of this section, we refer to a CCSS under this restriction as a 
\textbf{MM (Moore machine)-restricted CCSS}.  

\subsection{Restricting Schematics to Acyclic Moore Machines}

In \cite{BDDs}, Bryant presented the binary decision diagram (BDD) data structure as a means of
representing and manipulating Boolean functions.  The core mechanism underlying applications 
of BDDs is their function as read-once branching programs: functions are 
represented as rooted, directed acyclic graphs consisting of decision junctions and terminal 
nodes.  Each transition from a decision junction corresponds to a final assignment to exactly 
one variable, and terminal nodes correspond to function evaluations given assignments so far.  
The end result is a graph in which every distinct path corresponds to a distinct variable 
assignment $\vec{x}$ ending with a terminal node having a label corresponding to whether it 
satisfies the formula.

Perhaps more importantly, BDDs are especially useful with respect to Boolean function representation 
because they are amenable to compression through the use of simple reduction rules.  
Given a formula $\Phi$ in $n$ variables and an ordering of those variables, 
there exists the notion of a reduced-ordered BDD 
(ROBDD) which is able to represent every assignment (and its image in $\Phi$) in 
a diagram having often far fewer than $2^n$ nodes for many practical instances.
\newline\newline
Considering once more our interests in this work, there is at least one significant direct parallel between BDDs and the goals of
corpus-compressed streaming schemes.  In particular, consider a Moore machine schematic in which 
there are no cycles of states.  
If we view the graph created by the set of states $S$ and the transition function $T$, 
we see that we also have a rooted, directed acyclic graph with decision junctions at 
states having transition options for both input symbols $0$ and $1$ in which distinct 
paths lead to distinct outputs.

While BDDs and Moore machine schematics are obviously not exactly analogous with respect to 
goals and structure, these observable similarities naturally lead us to question whether 
similar reduction methods in corpus-compressed streaming might yield practical schemes.  
Inspired by this prospect, in this section, we study corpus-compressed streaming in 
our chosen restriction with the additional requirement that the graphs underlying schematics 
be acyclic.  We denote a valid CCSS under this restriction by the term `\textbf{AMM (Acyclic Moore machine)
-restricted CCSS}'.

\subsubsection{An Exact AMM-restricted $(0,\lceil \log n \rceil ,0)$-Corpus-compressed Streaming Scheme}

One of the most powerful reduction rules discussed in \cite{BDDs} is the merging of 
isomorphic subgraphs within non-reduced binary decision diagrams.
We extend and apply this idea on order to derive an exact AMM-restricted $(0, \log n, 0)$-CCSS.
This section proceeds as follows: we (1) present our scheme, (2) provide a proof of this scheme's 
validity, (3) note a fact about the streaming code length achieved, and
(4) provide a worked example.
\newline\newline
\noindent \textit{Scheme 1 } We provide pseudocode for the $Construct$ and $Encode$ procedures of our AMM-restricted CCSS. $Decode$ is given by Algorithm \ref{AMMDecode}.

\begin{algorithm}[H]
\caption{Construct Routine, Scheme 1}\label{AMMConstruct}
\begin{algorithmic}[1]
\Procedure{$Construct$}{$C=\{c_1,...,c_n\}$}
\State STAGE 1: \Comment Construct a MM with a directed tree topology
\State Initialize $D$ as a Moore machine with a single state $S_0$.
\For{$i=1...n$}
\State Set $s=S_0$.
\For{$j=1...\lvert c_i\rvert$}
\State Set $b$ to be the $j$th bit of $c_i$.
\If{$s$ has no $0$-, $1$-, or unconditional transition edges}
\State Add a new state $s'$ to $D$ which outputs $b$.
\State Add an unconditional transition edge from $s$ to $s'$.
\EndIf
\If{$s$ has an unconditional edge to a state $s''$ with output $1-b$}
\State Modify the transition from $s$ to $s''$ to be a transition on 
input symbol $1-b$.
\EndIf
\If{$s$ does not have a transition edge on input symbol $b$}
\State Add a new state $s'$ to $D$ which outputs $b$.
\State Add a transition on input symbol $b$ from $s$ to $s'$.
\EndIf
\State Set $s=s(b)$, where $s(b)$ denotes the state reached following 
a transition on symbol $b$.
\EndFor
\State Add a new node $s'$ with the special $\perp$ output symbol.
\State Add a transition from $s$ to $s'$ on end-of-input symbol $\perp$.
\State Mark $s'$ as a final state.
\EndFor
\State STAGE 2: \Comment Enumerate states in depth order
\State Initialize $Q$ as an empty queue.
\State Initialize $L$ as an empty list of length $\lvert S \rvert$.
\State $Q.enqueue(S_0)$
\State Set $i=0$
\While{$\lvert Q \rvert > 0$}
\State Set $s = Q.dequeue()$
\State Set $L[\lvert S \rvert - i] = s$
\State Set $i = i + 1$.
\State For all states $s'$ such that there is a transition from $s$ to $s'$, 
$Q.enqueue(s')$.
\EndWhile
\algstore{AMMConstruct}
\end{algorithmic}
\end{algorithm}

\begin{algorithm}[H]
\ContinuedFloat
\caption{Construct Routine (Continued)}\label{AMMConstruct}
\begin{algorithmic}
\algrestore{AMMConstruct}
\State STAGE 3: \Comment Apply reduction: merge isomorphic subgraphs
\State Initialize $T$ as an empty associative array (dictionary).
\For{$i=1...\lvert S \rvert$}
\State Set $s=L[i]$.  Let $s.out$ denote the output symbol of state $s$.
\State Set $id=s.out\mid\mid s(0) \mid\mid s(1) \mid \mid s(\perp)$.
\If{$T[id]=null$}
\State Set $T[id]=s$.
\Else
\State Let $r$ be the single node having a transition to $s$.  (Unless $s$ 
is $S_0$, in which case \textbf{break}).
\State Replace the transition from $r$ to $s$ with a transition on the 
same symbol to $T[id]$.
\State Delete $s$ from $D$.
\EndIf
\EndFor
\State If $S_0$ has only an unconditional transition to some state $s'$, 
delete $S_0$ and make $s'$ the start state.
\State \textbf{return} $D$,$A=\emptyset$.
\EndProcedure
\end{algorithmic}
\end{algorithm}

\noindent The $Construct$ routine of scheme 1 begins in stage 1 by constructing a 
Moore machine whose states and transitions take the form of a directed 
binary tree having $n$ paths such that following each path 
produces a unique string in the corpus set.  Starting from a non-output initial state, we iterate through all strings in the corpus set, bit by bit, branching where 
strings following the same path diverge.  In stage 2, the procedure 
prepares for iteration through the states of this Moore machine in order 
of decreasing depth from the start state.  In stage 3, the procedure 
merges isomorphic components in the machine using a light dynamic programming 
approach.  We prove the correctness and properties of this procedure later 
in this section.

The encoding procedure of scheme 1 is given as   

\begin{algorithm}[H]
\caption{Encode Procedure, Scheme 1}\label{AMMDecode}
\begin{algorithmic}[1]
\Procedure{$Encode$}{$D=Construct(C), A, c_i$}
\State If $S_0.out == 1 - c_i[1]$, \textbf{return} error.
\State Set $x$ to be the empty string.
\State Set $s=S_0$.
\For{$j = 1 ... \lvert c_i \rvert$}
\State If $s$ has only an unconditional edge to a state which outputs 
$1-c_i[j]$, \textbf{return error}.
\State If $s$ has no transition to a state which outputs $c_i[j]$, 
\textbf{return error}.
\State If $s$ has at least one non-unconditional transition, 
choose a symbol $b$ which leads to a state which outputs $c_i[j]$ 
and set $x = x \mid \mid b$.
\State Set $s$ to be the successor state of $s$ which outputs $c_i[j]$.
\EndFor
\State If $s$ has no $\perp$-transition, \textbf{return} error.
\State \textbf{return} x
\EndProcedure
\end{algorithmic}
\end{algorithm}

\noindent $Encode$ simply begins at the start state of the machine and
records the transitions along a path in $D$ which outputs the input 
string $c_i$.  We now move to prove that this scheme is indeed a 
$(0, \log{n}, 0)-CCSS$ as well as prove its runtime properties.

\paragraph{Proving $(0,\log{n},0)$-CCSS Validity}

We prove in this section that Scheme 1 is indeed a valid 
$(0, \log{n}, 0)$-CCSS.

\begin{lemma}
\label{ConstructL1}
At the end of stage 1 of $Construct$, $D$ has a tree topology in which the output 
along every path from the root ($S_0$) is unique.
\end{lemma}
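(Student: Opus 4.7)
The plan is to prove the statement by induction on the number $k$ of corpus strings processed so far by the outer for-loop of Stage 1. The induction hypothesis I would maintain is that after $c_1,\ldots,c_k$ have been processed, the partial machine $D_k$ is a rooted tree with root $S_0$, and the set of root-to-final paths (those ending by firing a $\perp$-transition into a state marked final) has outputs exactly $\{c_1,\ldots,c_k\}$. The base case $k=0$ is immediate because $D_0$ consists solely of $S_0$ and has no final states.

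For the inductive step, I would first handle the tree-topology portion, which is the easier half. Each inner-loop iteration either (a) creates a brand-new state $s'$ along with exactly one new incoming edge from the current $s$, (b) relabels an existing unconditional transition to a transition on symbol $1-b$, or (c) simply traverses an existing edge. Option (a) preserves acyclicity because $s'$ is introduced as a leaf, and preserves unique-parentage because the new edge is $s'$'s only incoming edge; options (b) and (c) do not touch the underlying undirected graph. The final $\perp$-edge appended after the inner loop is another instance of (a), so the tree invariant carries over to $D_k$.

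The output-uniqueness half is where the real work lies. I would first establish an auxiliary invariant: for every non-final state $s$, any non-$\perp$ outgoing transition that can fire on input $b$ leads to a state whose output symbol is exactly $b$ (with unconditional edges regarded as simultaneously firing on both $0$ and $1$). A case analysis over what $s$ looks like at the start of an inner iteration---no outgoing structural edges; an unconditional edge to a state with output $b$; an unconditional edge to a state with output $1-b$; or existing conditional edge(s)---shows that the three conditionals together perform exactly the minimal patch needed to produce a $b$-transition to a state whose output is $b$, while preserving the invariant on untouched parts of the machine. Combined with the tree property, this forces the walk's root-to-current-state output at step $j$ to equal $c_k[1\ldots j]$, so the appended $\perp$-edge extends this into a new final path with output $c_k$.

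The main obstacle, I expect, will be arguing that this newly produced final path is genuinely distinct from the existing ones rather than coinciding with a previously created $\perp$-edge out of the same terminal state. This can be resolved by appealing to distinctness of the corpus: if the terminal state $s$ already carried a $\perp$-transition to some earlier final state $f_j$, then the unique root-to-$s$ output would equal both $c_k$ and $c_j$, contradicting the hypothesis that the strings of $C$ are pairwise distinct. The fiddly bookkeeping around unconditional transitions---which must be treated simultaneously as $0$- and $1$-transitions when evaluating the third conditional---together with the sequential firing of all three conditionals, is the part of the argument that will demand the most care in the writeup.
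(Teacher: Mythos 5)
Your proposal is correct in substance, but it takes a heavier route than the paper and proves more than the lemma asks for. The paper's proof of this lemma is short and direct: the tree topology is read off immediately from the fact that every new state is created with a single incoming edge and is never later given another parent, and output-uniqueness is obtained by contradiction --- two distinct root-paths with equal output would have a first point of divergence, at which the construction (lines 6--19) would have to install two distinct successors of one state carrying the same output symbol, which it never does. Your induction on the number of processed corpus strings, with the invariant that the root-to-final outputs are exactly $\{c_1,\ldots,c_k\}$, is really a proof of Lemmas \ref{ConstructL2} and \ref{ConstructL3} folded into this one; what it buys you is that those two later lemmas come for free, at the cost of a longer case analysis here. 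Two small points to tighten. First, your auxiliary invariant as phrased (``any transition that can fire on input $b$ leads to a state with output $b$,'' with unconditional edges firing on both symbols) is literally unsatisfiable for unconditional edges, since a single successor cannot output both $0$ and $1$; you mean that a state carries either one unconditional structural edge or conditional edges each matching its target's output, and in particular never two structural successors with the same output symbol --- that weaker statement is what the divergence argument needs. Second, your stated conclusion concerns only root-to-final paths, whereas the lemma asserts uniqueness of output over \emph{all} paths from the root, including paths ending at internal states; this does not follow from distinctness of the corpus alone, so you still need to close with the first-point-of-divergence argument (which your auxiliary invariant makes immediate). With those two repairs the argument is sound.
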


\begin{proof}
That the graph underlying $D$ is a directed tree 
is immediate: when a new state is added, it is given a unique parent; likewise, when a 
new state is added, it is never given a transition to an existing state.  Because $D$ is a 
directed tree, there is a unique path from the root to every internal state.  Assume that there exist 
two of these unique paths $P_1$ and $P_2$ each starting with $S_0$ such that the output 
along these paths is equal.  
Because these paths cannot be the same, there must exist a first point of divergence along them. 
But at this point of divergence, there must be two transitions to two distinct states 
having the same output, which is impossible by lines 6-19.$\square$
\end{proof}

\begin{lemma}
\label{ConstructL2}
At the end of stage 1 of $Construct$, there is a path from root to leaf in $D$ 
for each $c_i \in C$ along which the machine will output $c_i$.
\end{lemma}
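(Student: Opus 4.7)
The plan is to prove the lemma by induction on the outer loop index $i$ of stage 1, maintaining the invariant that after iteration $i$, for every $j \leq i$ there exists a path from $S_0$ to a leaf in the current machine $D$ whose sequence of state outputs equals $c_j$ (followed by the terminal $\perp$). The base case $i = 0$ is vacuous. For the inductive step from $i$ to $i+1$, I would split the argument into (i) construction of a new path for $c_{i+1}$ and (ii) preservation of the paths for $c_1, \ldots, c_i$ guaranteed by the inductive hypothesis.

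For (i), I would sub-induct on the inner loop index $j$ to show that immediately after processing bit $j$ of $c_{i+1}$, the variable $s$ equals some state whose induced path from $S_0$ outputs the prefix $c_{i+1}[1 \ldots j]$. A short case analysis over the four branches at lines 8--19 confirms that after the body executes, a successor of $s$ with output $c_{i+1}[j]$ exists and is reached by the update $s \gets s(b)$; once the inner loop terminates, the $\perp$-transition to the freshly-added leaf completes the path and establishes the invariant for index $i+1$.

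For (ii), the only operations stage 1 performs are (a) adding new states, (b) adding new transitions, and (c) relabeling a pre-existing unconditional transition into a conditional one on the symbol $1 - b$. Operations (a) and (b) cannot disturb any prior path. Operation (c) does not delete the edge and does not modify the output function $G$, so the sequence of states constituting any prior path remains connected by transitions in $D$ and still emits the same sequence of outputs.

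The chief subtlety, and likely the main obstacle, is interpreting \emph{``the machine outputs $c_j$''} correctly in the presence of relabeling. One must observe that a path's output sequence is determined purely by the visited states under $G$ and not by the input symbols that drive the transitions; thus converting an unconditional edge into a conditional one preserves the existence of some input for which the decoder traces the pre-existing path and emits $c_j$ (the encoder of the prior path simply picks the surviving label at the relabeled state, which was a valid choice for it beforehand as well). Applying the invariant at $i = n$ then yields the lemma.
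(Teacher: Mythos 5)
Your proof is correct and matches the paper's argument in substance: the paper establishes the same two facts --- that iteration $i$ of the outer loop builds a path outputting $c_i$ (via a least-failing-prefix contradiction rather than your direct sub-induction on $j$), and that this path persists to the end of stage 1 because transitions are never deleted, only relabeled. Your explicit handling of the relabeling subtlety (a path's output depends on the visited states under $G$, not on the input symbols driving the transitions) is the same observation the paper compresses into the single clause ``since transitions are never deleted, this path must exist at the end of stage 1.''
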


\begin{proof}
Assume not; assume that there exists
a $c_i$ such that there does not exist a path from $S_0$ to a leaf along 
which 0/1 outputs correspond to $c_i$ at tne end of stage 1.  There must exist a least index $k$ 
from $1$ to $\lvert c_i \rvert$ such that there is a unique (by Lemma \ref{ConstructL1}) 
path from $S_0$ which outputs 
$c_i[1],...,c[k-1]$ (or the empty string if $k=1$) but not $c_i[1],...,c[k]$.  
During iteration $i$ of line $4$, we see 
that the sequence of the first $k-1$ values of $s$ is precisely this path; since transitions 
are never deleted, this path must exist at the end of stage 1.  But we see by lines 6-19 
that a transition to a state with output $c[k]$, meaning that there does exist a path with 
output $c[1],...,c[k]$, a contradiction.$\square$
\end{proof}

\begin{lemma}
\label{ConstructL3}
At the end of stage 1, every path from root to leaf in $D$ outputs a string in 
$C$.
\end{lemma}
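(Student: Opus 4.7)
The plan is to prove this by induction on the iteration count $k$ of the outer loop of stage~1, strengthened to the invariant: at the end of the $k$-th iteration, every root-to-leaf path in $D$ outputs some string in $\{c_1,\ldots,c_k\}$. Here ``leaf'' means a final state, which by construction coincides with the sinks of the tree of Lemma~\ref{ConstructL1}: every non-final state always acquires at least one outgoing $0$-, $1$-, unconditional, or $\perp$-transition before the inner loop moves past it, so the only nodes with out-degree zero are the fresh final states created after each inner loop. The base case $k=1$ is immediate, since stage~1 builds a single chain of states whose outputs spell $c_1$, terminated by a $\perp$-transition to a fresh final state.

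For the inductive step, I would assume the invariant after $k$ iterations and partition the root-to-leaf paths present at the end of iteration $k+1$ into two groups: (a) those whose terminal $\perp$-transition already existed at the end of iteration $k$, and (b) those whose terminal $\perp$-transition is the unique new one added during iteration $k+1$. Group (b) is handled by essentially repackaging Lemma~\ref{ConstructL2}: only one final state is created per outer iteration, its sole incoming edge is the $\perp$-transition from the state $s$ reached at the end of the inner loop, and by inspection of lines~6--19 the sequence of outputs along the trace of the inner loop is exactly $c_{k+1}[1],\ldots,c_{k+1}[|c_{k+1}|]$, so the whole path outputs $c_{k+1}$.

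The main obstacle is group (a): I must argue that the mutations performed by iteration $k+1$ on pre-existing edges do not alter the output sequence of any path already present at the end of iteration $k$. The only such mutation is at lines~12--14, which relabels an existing unconditional transition from $s$ to $s''$ as a $1-b$ transition. This edit changes neither the source, the target, the output function $G$ of any state, nor the edge set of the underlying graph; it merely refines the input label under which that edge fires. Since a root-to-leaf path and the output sequence it emits are functions of the graph topology and $G$ alone, the output sequences of all paths in group~(a) are identical at the ends of iterations $k$ and $k+1$, and the inductive hypothesis finishes the case. Combining groups (a) and (b) establishes the invariant for $k+1$; taking $k=n$ yields the lemma.
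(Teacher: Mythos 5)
Your proof is correct, but it takes a genuinely different route from the paper's. The paper argues by counting: every leaf is a $\perp$-output final state, exactly one such state is added per string, and since $D$ is a tree there are therefore exactly $n$ root-to-leaf paths; Lemma \ref{ConstructL2} supplies a path outputting each of the $n$ distinct strings, a single path outputs a single string, so those are $n$ distinct paths and they exhaust all paths --- hence every path outputs a member of $C$. You instead run an induction over the outer loop with the strengthened invariant that after $k$ iterations every root-to-leaf path outputs some $c_j$ with $j \leq k$, splitting paths by whether their terminal $\perp$-edge is old or new and checking that the only in-place mutation (relabelling an unconditional edge as a $1-b$ edge) preserves the topology and the output function. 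The trade-off: the paper's counting argument is shorter but leans on the tacit facts that each iteration contributes exactly one new leaf and that no earlier leaf ever acquires an outgoing edge (so old paths survive intact); your induction makes precisely this stability explicit, at the cost of partially re-deriving Lemma \ref{ConstructL2} inside your group (b). Both are sound; yours is the more self-contained and careful of the two, the paper's the more economical given the preceding lemmas.
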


\begin{proof}
By construction, every leaf in $D$ has output symbol $\perp$.  There are exactly 
$n$ such leaves added to $D$.  Since paths in $D$ are unique ($D$ forms a tree), there are 
exactly $n$ paths from root to leaf in $D$.  By Lemma \ref{ConstructL2}, there must be a 
path from root to leaf in $D$ which outputs each $c_i \in C$.  Since there are $n$ paths 
among $n$ strings, we conclude that there is a one-to-one correspondence between paths in 
$D$ and strings in $C$.  Every path must therefore output a string in $C$.$\square$
\end{proof}

\begin{lemma}
\label{ConstructL4}
Let $D$ be a Moore machine whose underlying graph is acyclic, rooted at $S_0$, in 
which $\perp$-output states are either $S_0$ or nodes without any outbound transitions (leaves).
For any state $a$ in $D$, let $gen(a)$ be the set of strings which are generated 
by $D_a$, the machine whose underlying graph is the directed acyclic subgraph rooted at 
$a$  (i.e., the set of strings output by $D$ when starting 
at $a$ and following any path to a leaf).
\newline\newline
\noindent For any two distinct states $a$ and $b$, $gen(a) = gen(b)$ if and only if 
there exist states $x \in D_a$, $y \in D_b$ such that $D_x$ is isomorphic to $D_y$ but $x \neq y$.
\end{lemma}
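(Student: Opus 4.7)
The plan is to prove each direction of the biconditional separately. Since $a$ and $b$ are distinct by hypothesis, the simplest choice of witnesses is $x = a$ and $y = b$, which automatically satisfies $x \neq y$; with that choice the lemma reduces to the claim that $gen(a) = gen(b)$ iff $D_a$ and $D_b$ are isomorphic as Moore machines, and my plan is to establish this stronger equivalence directly.

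For the ($\Leftarrow$) direction, assume an output-preserving isomorphism $\phi : D_a \to D_b$ exists. Any root-to-leaf path $a = s_0 \to s_1 \to \cdots \to s_k$ in $D_a$ maps under $\phi$ to a path $b = \phi(s_0) \to \phi(s_1) \to \cdots \to \phi(s_k)$ in $D_b$. Because $\phi$ is required to preserve both the output function $G$ and the transition function $T$, the $0/1$ output string produced along these two paths is identical, yielding $gen(a) \subseteq gen(b)$. The reverse inclusion follows from applying the same argument to $\phi^{-1}$.

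For the ($\Rightarrow$) direction, I would use strong induction on the longest string length appearing in $gen(a)$, which coincides with that of $gen(b)$ under the hypothesis. In the base case this length is zero, which forces both $D_a$ and $D_b$ to be trivial single-state sub-DAGs (a leaf, or the root followed by a $\perp$-transition to a leaf) whose outputs must match; the isomorphism is immediate. For the inductive step, I decompose $gen(a)$ by first output symbol, setting $G_\sigma = \{w : \sigma w \in gen(a)\}$ for $\sigma \in \{0,1\}$, and note that the equality $gen(a) = gen(b)$ forces the analogous decomposition of $gen(b)$ to coincide with the $G_\sigma$. The transition structure at $a$ (unconditional, conditional on both $0$ and $1$, or $\perp$) then identifies which successor of $a$ has $gen$ equal to each $G_\sigma$, and the symmetric analysis at $b$ pairs up successors. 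Each matched pair has strictly smaller depth, so the inductive hypothesis delivers sub-isomorphisms that I can assemble, along with a root correspondence $a \mapsto b$, into a full isomorphism $D_a \to D_b$.

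The technical heart and main obstacle lies in the inductive step, specifically in reconciling the various possible transition structures at $a$ and $b$ while guaranteeing a consistent matching of successors. The delicate subcase is when multiple successors of a single state share an output symbol and the generated subsets $G_0, G_1$ overlap or collapse, so that a root-to-root isomorphism may fail even though $gen(a) = gen(b)$ holds. In that subcase I would fall back on exhibiting the required witnesses $x \in D_a, y \in D_b$ strictly below the roots, by tracking two distinct states reached by input prefixes that produce the same output prefix; verifying that such a pair always exists whenever $gen(a) = gen(b)$, via a pigeonhole argument on the finitely many sub-DAG isomorphism classes in an acyclic $D$, closes the argument without requiring a full root-to-root isomorphism.
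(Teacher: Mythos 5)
Your $(\Leftarrow)$ direction (an output-preserving isomorphism $D_a \cong D_b$ forces $gen(a)=gen(b)$) is fine and matches the paper's, modulo the fact that both you and the paper only ever handle the witnesses $x=a$, $y=b$ rather than arbitrary $x \in D_a$, $y \in D_b$. The problem is the forward direction. Your primary plan is to prove the stronger claim $gen(a)=gen(b) \Rightarrow D_a \cong D_b$, and you correctly flag that this can fail (two successors of a state may share an output symbol, so the decomposition of $gen(a)$ by successor output bit need not determine the successor structure). But that means the entire content of the lemma rests on your fallback, and the fallback as written does not work: a ``pigeonhole argument on the finitely many sub-DAG isomorphism classes'' produces two states of $D$ with isomorphic sub-DAGs, but nothing forces one of them to lie in $D_a$ and the other in $D_b$, which is exactly what the lemma demands. ``Tracking two distinct states reached by input prefixes that produce the same output prefix'' likewise gestures at the right object without an argument that such a pair exists with one member in each sub-DAG.

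The paper supplies the missing step with a descent rather than a pigeonhole. Given distinct $a,b$ with $gen(a)=gen(b)$: either $D_a \cong D_b$ (done, with $x=a$, $y=b$), or some successor direction $\Delta \in \{0,1,\perp\}$ has $s_\Delta^a \neq s_\Delta^b$ while $gen(s_\Delta^a)=gen(s_\Delta^b)$, because equality of generated sets propagates to the successors partitioned by output bit (the paper, like you, implicitly assumes at most one successor per output symbol here); recurse on that distinct pair. Acyclicity forces termination, and the process either halts at a distinct pair whose successors all coincide --- hence an isomorphic distinct pair --- or bottoms out at two distinct leaves, which are automatically isomorphic since every leaf is a bare $\perp$-output state with no outbound transitions. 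That last observation, that distinct leaves are free witnesses, is the elementary fact your sketch is missing; with the descent in place of the pigeonhole appeal, your argument becomes essentially the paper's.
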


\begin{proof}
$(\Rightarrow)$.  We show that $D_a \cong D_b$ implies $gen(a) = gen(b)$ by an inductive 
argument.  Consider first the case where $a$ and $b$ are leaves.  By our choice of $D$,
$a$ and $b$ must be $\perp$-output leaves.  Therefore $gen(a) = gen(b) = \{\perp\}$, the 
set containing the empty string. Assume now that $D_a \cong D_b \Rightarrow gen(a) = gen(b)$ holds
for the children of some pair of states $x$ and $y$ in $D$ having $D_x \cong D_y$.  Because 
$D_x \cong D_y$, the output bits of $x$ and $y$, $d_x$ and $d_y$, must be equal.  Similarly,
the subgraphs of the $0$-,$1$-,and $\perp$-successors of $x$ must be respectively isomorphic to those of $y$, therefore having the same generated set of strings by our inductive assumption; 
call these $s_0,s_1,s_\perp$ respectively, and let $gen(s_\cdot) = \emptyset$ if the given successor does not exist.

We may explicitly determine $gen(x)$ as 
$gen(x) = \bigcup_{s_\cdot \in s_0, s_1, s_\perp} \{ d_x \mid \mid z, \forall z \in gen(s_\cdot) \}$.  
Likewise, $gen(y) = \bigcup_{s_\cdot \in s_0, s_1, s_\perp} \{ d_y \mid \mid z, \forall z \in gen(s_\cdot) \}$.
Because $d_x = d_y$, then, $gen(x) = gen(y)$.  Since $x \neq y$, we conclude the proof in this direction.
\newline\newline
$(\Leftarrow)$.  Consider now any distinct states $a$ and $b$ with $gen(a) = gen(b) = g$.
If $g$ contains the one-bit string $d_x$, then both $a$ and 
$b$ have $\perp$-successors $s_\perp^{a}$ and $s_\perp^{b}$, both leaves with the same 
set of generated strings.  If $g$ contains 
strings which are 1 in the second bit, $a$ and $b$ must each have exactly one successor with output bit 1, 
$s_1^{a}$ and $s_1^{b}$, each generating the subset of $g$ which is 1 in the second bit.
Likewise, if $g$ contains strings which are 0 
in the second bit, $a$ and $b$ must each have a successor state with output bit 0,
$s_0^{a}$ and $s_0^{b}$, each generating the subset of $g$ which is 0 in the second bit.  (By the same 
argument, if $g$ does not contain strings which are 0 in the second bit, neither $a$ nor $b$ 
have 0-successors; the same holds for strings which are 1 in the second bit.  Thus 
$s_\cdot^{a} \neq \emptyset \iff s_\cdot^{b} \neq \emptyset$).

Now say that we wish to contradict the statement that there exist 
$x \in D_a$, $y \in D_b$ such that $D_x$ is isomorphic to $D_y$ but $x \neq y$.  Then it must be the 
case that $D_{s_1^{a}} \not\cong D_{s_1^{b}} \lor s_1^{a} = s_1^{b}$, 
$D_{s_0^{a}} \not\cong D_{s_0^{b}} \lor s_0^{a} = s_0^{b}$, and 
$D_{s_\perp^{a}} \not\cong D_{s_\perp^{b}} \lor s_\perp^{a} = s_\perp^{b}$.  Indeed, equality cannot 
hold between all successors of $a$ and $b$, else it would be trivially true that $a$ and $b$ are isomorphic, 
and our original statement holds.  There must then exist a $\Delta \in \{0,1,\perp\}$ such that 
$s_\Delta^{a} \neq s_\Delta^{b}$ (whose corresponding subgraphs re not isomorphic).  But we have established that 
$gen(s_\Delta^{a}) = gen(s_\Delta^{b})$, and so we may repeat this argument starting from these 
two states.  Following this pattern, we may continue until (a) all successors are equal, and we 
obtain an example, or (b) are are considering two distinct leaves $x'$, in $D_a$, and $y'$, in $D_b$ 
[we may not consider a leaf and a non-leaf, as it cannot be the case that a leaf and non-leaf generate the same string set].  
But, by choice of $D$, $x'$ and $y'$ have the same output symbol, and so we simultaneously have
$x' \in D_a$, $y' \in D_b$, $x' \neq y'$, and $D_{x'} \cong D_{y'}$.$\square$.
\end{proof}

\begin{lemma}[0-Minimality Condition]
\label{ConstructL5}
Fix a corpus set $C=\{c_1,...,c_n\}$.  Let $X$ be a Moore machine following the 
description given in Lemma \ref{ConstructL4} such that each has exactly $n$ paths 
beginning with $S_0$ and ending at a leaf and where each such path outputs a unique 
string in $C_n$.  Let $Y$ be another Moore machine meeting the same requirements.
Denote by $\lvert X \rvert$ and $\lvert Y \rvert$ respectively the number of states in $X$  
and $Y$.
\newline\newline
\noindent $\lvert X \rvert < \lvert Y \rvert$ implies that there exist states $u$ and 
$v$ such that $D_u \cong D_v$ or states $w$ and $x$ such that a path from 
$S_0$ to $w$ has the same output as one from $S_0$ to $x$.  
As a result, $Y$ is minimal if there are no two distinct 
states $u$ and $v$ in $Y$ such that $D_u \cong D_v$ and no two distinct states $w$ and $x$ 
such that the paths from $S_0$ to each have the same output.
\end{lemma}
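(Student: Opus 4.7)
The plan is to prove the contrapositive: assume $Y$ has no two distinct states with isomorphic subgraphs and no two distinct states reachable by paths from $S_0$ having the same output. Under these two assumptions, I would show $\lvert Y \rvert \leq \lvert X \rvert$ for every admissible machine $X$ generating $C$. At a high level, the two hypotheses together force the states of $Y$ to stand in bijection with the equivalence classes of prefixes of $C$ under shared completion sets, and any admissible $X$ must carry at least one state per such class.

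The argument would proceed in three steps. First, applying the contrapositive of Lemma \ref{ConstructL4} to the first hypothesis yields that distinct states $u, v$ of $Y$ always satisfy $gen(u) \neq gen(v)$: if equality held, the lemma would produce a pair of distinct isomorphic substructures inside $D_u$ and $D_v$, contradicting the first hypothesis. Second, the second hypothesis implies the sets $pref_Y(y)$ are pairwise disjoint, so every prefix $\sigma$ of a string in $C$ is reached in $Y$ at a unique state $y_\sigma$, and all completions of $\sigma$ to a corpus string must pass through $y_\sigma$; consequently the completion set $\{\lambda : \sigma\lambda \in C\}$ equals $gen_Y(y_\sigma)$. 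Combined, these give a bijection between the states of $Y$ and the distinct completion sets realized by prefixes of $C$, so $\lvert Y \rvert$ equals the number $N$ of such equivalence classes. Third, for any admissible $X$, I would argue $\lvert X \rvert \geq N$ by first collapsing any pair of states of $X$ sharing a $gen_X$ value (which never increases the state count) to obtain a reduced machine $X'$ whose states have distinct $gen_{X'}$, and then matching states of $X'$ with states of $Y$ via induction in reverse topological order from the unique $\perp$-leaf upward.

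The principal obstacle is this final inductive matching. Because $X$ is not assumed to satisfy the second hypothesis, a single state of $X$ may be reached by prefixes lying in several different completion-set classes, so one cannot directly inject states of $Y$ into $X$ merely by picking for each $y \in Y$ some prefix in $pref_Y(y)$ and following it in $X$. To navigate this, the plan exploits the uniqueness of the minimal acyclic Moore machine for a finite language: the reduced $X'$ and $Y$, both having distinct $gen$ values per state and generating the same corpus, must agree level-by-level in topological order, which can be verified by inductively matching the children of matched states while carefully handling the Moore-machine output-overlap convention and the freedom in labeling outbound transitions at each state.
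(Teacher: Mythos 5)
Your overall framing (argue the contrapositive; show the two hypotheses force the states of $Y$ into bijection with the completion-set classes of prefixes of $C$, so $\lvert Y \rvert = N$; then show every admissible $X$ has at least $N$ states) is sound through its first two steps, and your use of Lemma \ref{ConstructL4} to convert the no-isomorphic-pair hypothesis into ``distinct states of $Y$ have distinct $gen$ sets'' matches the paper. The gap is in your third step, and the repair you propose does not work. There is no ``uniqueness of the minimal acyclic Moore machine'' to appeal to here: determinism is over \emph{input} symbols while the admissibility constraint is over \emph{output} strings, so a state may have two distinct successors carrying the same output bit, and a machine can therefore have pairwise-distinct $gen$ sets at every state yet fail to agree with $Y$ level-by-level. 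Concretely, take $C=\{000,001,100,101\}$ and let the depth-one state $A$ (output $0$) send its $0$-transition to $C_1$ (output $0$, continuing to $0$) and its $1$-transition to $C_2$ (output $0$, continuing to $1$), with $B$ (output $1$) sharing $C_1$ and $C_2$. This machine has exactly four root-to-leaf paths outputting exactly the strings of $C$, every state has a distinct $gen$ set (so your collapse leaves it untouched), yet it has eight states against the seven of the canonical $Y$ and is not isomorphic to it. So after collapsing you can only hope for the one-sided inequality $\lvert X' \rvert \geq \lvert Y \rvert$, and proving that inequality is precisely the content your sketch defers.

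The paper closes this hole by a different device: assuming $\lvert X \rvert < \lvert Y \rvert$, a pigeonhole argument forces two of the $\lvert Y \rvert$ prefix roles to collide at a single state $s_k$ of $X$, i.e.\ $s_k$ sits at position $k$ on the path generating $c_i$ and at position $l$ on the path generating $c_j$. If the two prefixes reaching $s_k$ have the same output, the corresponding states of $Y$ are two distinct states reached by root paths with identical output, violating the second hypothesis. If they differ, the fact that every root-to-leaf path of $X$ must output a corpus string forces the $gen$ sets of the two corresponding states of $Y$ to coincide with $gen(s_k)$ and hence with each other, which by Lemma \ref{ConstructL4} yields an isomorphic pair, violating the first hypothesis. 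Any completion of your argument will need an injection of this kind from completion classes into states of $X$ together with a case analysis ruling out exactly these two collision modes; that is the idea your proposal is missing.
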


\begin{proof}
We provide a proof by contrapositive.  Assume (1) that for all pairs of distinct states 
$u$ and $v$ in $Y$, $D_u \not\cong D_v$. Then by Lemma \ref{ConstructL4}, there do not exist any distinct states $u$ and $v$ in $Y$ having $gen(u) = gen(v)$.  Further, (2) assume that there do not exist 
any distinct states $w$ and $y$ such that any path from $S_0$ to either has the same output.

For the sake of contradiction, assume that $X$ has $\lvert X \rvert < \lvert Y \rvert$.  By the 
pidgeonhole principle, there necessarily exist strings $c_i$ and $c_j$ and indices $k$ and $l$ 
such that state $s_k$ along the path $P_i$ generating $c_i$ in $X$ is state $s_l$ along the path 
$P_j$ generating $c_j$ in $X$.  Without loss of generality, say that $s_k$ outputs the $k$th bit of 
$c_i$ and that $s_l$ outputs the $l$th bit of $c_j$.

\begin{enumerate}
\item{The prefix $t$ of $c_i$ generated by the sub-path of $P_i$ from $S_0$ to $s_k$ in $X$ is also a prefix of 
$c_j$.  But $u_k$ and $v_k$, respectively the states outputing bit $k$ of $c_i$ and $c_j$ 
in $Y$, must necessarily have a path from the root which outputs this prefix.  
This contradicts our choice of $Y$.}
\item{The substring $t_i$ of $c_i$ generated by the sub-path of $P_i$ from $S_0$ to $s_k$ in $X$ is not a 
prefix of $c_j$.  Let $t_j$ be the substring of $c_j$ generated by the sub-path of $P_j$ 
from $S_0$ to $s_l=s_k$ in $X$.  Since every path in $X$ must correspond to a string in 
$C$, we have that $Q_i = \{ t_i \mid \mid q, \forall q \in gen(s_k)\} \subseteq C$ 
and $Q_j = \{ t_j \mid \mid q, \forall q \in gen(s_k)\} \subseteq C$.  

By condition (2) 
on $Y$, there must exist a unique state $u_i$ in $Y$ which generates the substring $t_i$ 
through which all $Q_i$ are generated, 
and there must exist a unique state $v_j$ in $Y$ which generates the substring $t_j$ through 
which all $Q_j$ are generated.  This implies that $gen(s_k) \subseteq gen(u_i)$ and also 
that $gen(s_k) \subseteq gen(v_j)$.  Without loss of generality, say that condition (2) 
holds also for $X$ (given an $X$ for which this is not true, we may simply construct 
one by merging identical prefix paths).  If there exists a string $r \in gen(v_j)$ 
not in $gen(s_k)$, then the string $t_i\mid\mid r$ is in $C$ but cannot be 
generated by $X$ by condition (2); the same argument holds for strings in $gen(u_i)$.  
Indeed, then, it is the case that $gen(v_j) = gen(u_i)$, therefore that 
there exist $a$ and $b$ such that $D_a \cong D_b$ by Lemma \ref{ConstructL4}, a 
contradiction.}
\end{enumerate}
.$\square$.
\end{proof}
\clearpage
\begin{theorem}[Validity of Scheme 1]
Scheme 1 is a valid $(0, \log{n}, 0)$-CCSS.
\end{theorem}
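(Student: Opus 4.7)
The plan is to establish the three defining properties of a $(0, \log n, 0)$-CCSS in turn: exact correctness ($\epsilon = 0$), the compression bound $p(n,z) \leq \log n$, and $\delta = 0$ state-count minimality.

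For correctness I would combine Lemmas \ref{ConstructL1}, \ref{ConstructL2}, and \ref{ConstructL3}, which together establish that after stage 1 of $Construct$ the Moore machine $D$ is a rooted tree with exactly $n$ root-to-leaf paths, each outputting a unique $c_i \in C$. Stage 2 is purely preparatory, ordering states by decreasing depth into the list $L$. I would then argue that stage 3 preserves the generated-string set of $D$: the loop identifies two states whenever they share an output symbol and the same triple of successors on $\{0,1,\perp\}$, so their rooted subgraphs are literally isomorphic, and because $L$ places deeper states first, by the time a state $s$ is examined its successors have already been canonicalized. Lemma \ref{ConstructL4} then gives that any such merge identifies two states with equal $gen$ sets, and by induction along this reverse-topological order every ancestor's $gen$ set is preserved; thus the final $D$ still satisfies $gen(S_0) = C$. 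The $Encode$ procedure walks down a path matching $c_i$ bit by bit while recording only the symbols chosen at non-unconditional branches, and $Decode$ (Algorithm \ref{AMMDecode}) reproduces $c_i$ by following those recorded choices and reading off the outputs of visited states; because unconditional transitions are forced, no information is lost by omitting them, giving $\epsilon = 0$.

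For minimality, I would invoke Lemma \ref{ConstructL5}, whose two hypotheses both must be discharged. The first, that no two distinct surviving states have isomorphic rooted subgraphs, holds by construction of stage 3: each state is keyed in $T$ by its id tuple, and any later state sharing that id is merged in, so the survivors have pairwise distinct ids and hence non-isomorphic subgraphs. The second, that no two distinct states are reachable from $S_0$ by paths producing the same output, holds after stage 1 by the tree topology of Lemma \ref{ConstructL1} and is preserved through stage 3: when two states $s, s'$ with $D_s \cong D_{s'}$ collapse, the root-paths formerly landing on distinct $s$ and $s'$ (which had pairwise distinct outputs in the pre-merge tree) now land on a single merged state rather than on two distinct ones, so no pair of surviving distinct states inherits a shared root-path output. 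Lemma \ref{ConstructL5} then gives that $D$ attains the minimum state count among valid schematics, so $\delta = 0$.

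The main obstacle will be the compression bound. By construction $|Encode(D,A,c_i)|$ equals the number of non-unconditional branches traversed along the path generating $c_i$, so it suffices to show that no root-to-leaf path in $D$ crosses more than $\lceil \log n \rceil$ such branches. My plan is to exploit the minimality just established: at any decision junction $s$, the outgoing branches lead to states with distinct output symbols, so the corresponding sub-DAGs generate disjoint subsets of $C$ whose union is $gen(s)$. Combined with the fact that $D$ encodes exactly $n$ distinct strings and Theorem \ref{CodeLenLB}'s information-theoretic lower bound, I would pursue a counting argument showing that a longer-than-$\lceil\log n\rceil$ decision chain along a single path would force two distinct states to have $gen(u)=gen(v)$, contradicting the no-isomorphism condition of Lemma \ref{ConstructL5}. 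The delicate step---which I expect to be the hardest---is establishing that each recorded bit genuinely contributes discriminating power in the worst case rather than producing a severely unbalanced split; doing so will likely require a careful case analysis of how $0$-, $1$-, and $\perp$-transitions interact at a single state, together with an additional structural claim about how the minimality property constrains the shape of the merged DAG.
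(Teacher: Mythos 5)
Your treatment of correctness and of $0$-minimality tracks the paper's own proof closely: the same combination of Lemmas \ref{ConstructL1}, \ref{ConstructL2}, and \ref{ConstructL3} for the stage-1 tree, the same observation that stage 3 merges only states with equal $gen$ sets (processed in decreasing depth order so that successors are already canonical, hence preserving $gen(S_0)=C$), and the same discharge of the two hypotheses of Lemma \ref{ConstructL5}. Those two parts are essentially the paper's argument and are fine.

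The compression bound is where your proposal stops being a proof and becomes a plan, and the difficulty you flag is not merely delicate but fatal to the plan as stated: nothing forces the decision junctions along a path to split $gen$ evenly, so the hoped-for ``each recorded bit at least halves the candidate set'' counting does not hold. Concretely, take $C=\{1,\ 01,\ 001,\ 0001,\ 00001\}$, so $n=5$ and $\lceil\log n\rceil=3$. Stage 1 produces a caterpillar tree in which the path generating $00001$ passes through four states each having both a $0$- and a $1$-successor, and stage 3's merging (which collapses the $\perp$-leaves and the terminal $1$-output states) removes none of these junctions; $Encode$ therefore emits four bits for $00001$. No appeal to minimality can repair this, since the resulting schematic already satisfies both conditions of Lemma \ref{ConstructL5}; the general bound one gets from ``$n$ root-to-leaf paths'' is only $n-1$ junctions along a single path, with $\lceil\log n\rceil$ holding only when the branching is balanced. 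For what it is worth, the paper's own proof of this step simply asserts that a machine with exactly $n$ paths can have at most $\lceil\log n\rceil$ junctions, which does not follow and is contradicted by the same example --- so the step you isolated as the main obstacle is exactly the step on which the published argument is unsound, and closing your gap would require either weakening the claimed bound or altering the construction (e.g., weight-balanced, Huffman-style branching with variable-length junction labels) rather than a cleverer counting argument over the scheme as given.
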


\begin{proof}
\newline \noindent \textit{0-correctness } We demonstrate that scheme 1 is 0-correct.
By lemmas \ref{ConstructL1}, \ref{ConstructL2}, and \ref{ConstructL3}, at the end of 
stage 1 of $Construct$, $D$ has a directed tree topology and has exactly one path 
from root to leaf which outputs one unique string in $C_i$.  In stage 2 of construct ,
we order this tree in decreasing order of the depth of states in $D$.  For each state 
from root to leaf, we then in stage 3 merge isomorphic components 
(which therefore generate the same 
sets of strings).  Because only states generating the same sets of strings are merged, 
$D$ preserves both the number of paths and the 
set of strings generated from the root (the corpus set).   

So long as the string $c_i$ given to $Encode$ is a string in $C$, there exists a path 
in $D$ returned by $Encode$ which outputs $c_i$.  Further, by the construction of 
$D$, there is exactly one path through $D$ for every prefix of any string, and so the 
procedure given in lines 5-10 of $Encode$ will generate $x$ as the sequence 
of non-unconditional inputs needed to generate $c_i$.  It follows directly, then, 
that running $D$ on $x$ (and reading $\perp$ at the end of $x$) will yield $c_i$ 
without any reconstruction error.  It thus holds that 

\[ \forall c_i \in C, \frac{ham(Decode(\mathcal{D},Encode(\mathcal{D}, A, c_i)), c_i)}{\lvert c_i \rvert} \leq \epsilon \]

\noindent \textit{\(\lceil \log{n} \rceil \)-compression } By lemmas \ref{ConstructL1}, \ref{ConstructL2}, 
and \ref{ConstructL3}, and as we have shown in our demonstration of 0-correctness, the 
schematic returned by $Construct$ has exactly $n$ paths.  There can therefore be at most 
$\lceil \log{n} \rceil$ junctions at which there is more than one transition.  At any junction, one 
of the following must be true:

\begin{enumerate}
\item{There is an unconditional transition and a $\perp$ transition.  Then all that is needed 
is a single unary signal indicating to continue.  (Include an additional 1 in the input to continue; 
else end the input where it is.)}
\item{There is a 0-transition and a 1-transition.  Then all that is needed is a single 
binary signal indicating which transition to take.}
\end{enumerate}

\noindent Since in both cases each junction requires only a one-bit indicator, we conclude that 

\[\forall c_i \in C, \lvert Encode(D, c_i) \rvert \leq \log{n} \]

\noindent \textit{0-minimality } Note in the specification of $Construct$ that, 
for every unique prefix among strings in $C$, $D$ has a unique path from 
$S_0$ to some node $x$ which generates that prefix at the end of stage 1.  Thus, for $D$ 
(at the end of stage 1), there do not exist $w$ and $x$ such that a path from 
$S_0$ to $w$ has the same output as one from $S_0$ to $x$.  Because stage 3 does not 
add new states or cause any state to generate a new set of strings (by the argument 
given in our discussion of 0-correctness), we conclude that the final schematic $D$ 
returned maintains this property.

We also claim that the final schematic $D$ returned does not contain any 
distinct states $a$ and $b$ such that $gen(a) = gen(b)$.  We show this by induction on 
the depth of states.  Base case: consider any two distinct states $a$ and $b$ with depth 
greater than or equal to $m$, the maximum depth of any state in $D$. $a$ and $b$ are necessarily 
leaves, thus also necessarily $\perp$-output states.  They will both thus have 
$id=\perp \mid \mid \emptyset \mid \mid \emptyset \mid \mid \emptyset$ in stage 3, line 39, and would have been merged.
As an inductive hypothesis, assume that there are no distinct pairs of nodes $x$ and 
$y$ at depth greater than or equal to $k$ such that $gen(x) = gen(y)$.  Say that there exists a 
pair $a$ and $b$ at or below depth greater than or equal to $k-1$ such that $gen(a) = gen(b)$.
Then $gen(a(0)) = gen(b(0))$, $gen(a(1)) = gen(b(1))$, and $gen(a(\perp)) = gen(b(\perp))$.
This implies that the successors of $a$ respectively generate the same set of strings as the 
successors of $b$.  But since all of these successors are at depth greater than $\geq k$, 
they cannot be distinct by the inductive hypothesis.  
Therefore, where $o$ is the output symbol of $a$ and $b$, $id= o \mid\mid s(0) \mid\mid s(1) \mid \mid s(\perp)$ will be equal 
for these nodes, and so they would therefore be merged in stage 3 of $Construct$.
We note that the procedure given in stage 3 enforces the invariant of this inductive argument 
by processing states in decreasing depth order.

By Lemma \ref{ConstructL4}, then, the schematic $D$ returned by $Construct$ 
does not contain any pairs of states $a$ and $b$ 
such that $D_a \cong D_b$.  Since we have already shown that 
there do not exist $w$ and $x$ such that a path from $S_0$ to $w$ has the 
same output as one from $S_0$ to $x$ in $D$, we conclude by Lemma \ref{ConstructL5} that 
$D$ is minimal, and so 

\[ \frac{\lvert \mathcal{D} \rvert}{\lvert \mathcal{D}* \rvert} \leq 1  \] 

$\square$.
\end{proof}

\paragraph{Runtime Analysis}

In this section, we show the runtime for all procedures (Encode, Decode, Construct)
comprising scheme 1.

\begin{theorem}[Runtime of Scheme 1 Construct]
The $Construct(C=\{c_1,...,c_n\})$ routine of Scheme 1 completes in time
$O(\sum_{i=1}^{n}\lvert c_i \rvert)$. 
\end{theorem}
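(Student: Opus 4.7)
The plan is to analyze the three stages of $Construct$ separately and observe that the dominant bound in each is $O(\sum_{i=1}^{n} |c_i|)$, either directly (stage 1) or via a bound on the number of states $|S|$ produced by stage 1 (stages 2 and 3).

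First, I would show that stage 1 runs in time $O(\sum_i |c_i|)$. The nested loop on lines 4--22 executes its inner body exactly $\sum_i |c_i|$ times, and each inner iteration performs a constant number of operations: examining at most three outgoing edges of the current state (the $0$-, $1$-, and unconditional transitions), possibly allocating one new state, possibly relabeling at most one existing edge, and advancing the pointer $s$. The final lines 20--22 add one leaf per string, contributing $n \le \sum_i |c_i|$ additional constant-time operations. As a by-product of this analysis I would record the bound $|S| = O(\sum_i |c_i|)$ on the total number of states produced after stage 1, since each bit and each string contributes at most one new state.

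Next, stage 2 performs a BFS on the tree-topology machine from $S_0$, enqueueing each state once and examining each transition once. Because the underlying graph is a tree with $|S|$ nodes and hence at most $|S|-1$ edges, and the list $L$ of length $|S|$ is filled via constant-time index assignments, stage 2 runs in $O(|S|) = O(\sum_i |c_i|)$ time. Stage 3 iterates once over $L$. For each state $s$ it builds the constant-size identifier $id = s.out \,||\, s(0) \,||\, s(1) \,||\, s(\perp)$, performs one dictionary lookup, and optionally performs one constant-work edge redirection and one state deletion. Using a standard hash table for $T$ keyed on the fixed-size tuple $id$, insertion and lookup take $O(1)$ expected time, giving $O(|S|) = O(\sum_i |c_i|)$ for stage 3 as well.

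Summing the three bounds yields the claimed total runtime $O(\sum_i |c_i|)$. The main thing to be careful about is the stage 3 dictionary: I would explicitly argue that each $id$ is of constant size (a single output symbol plus three state identifiers, each representable in $O(\log |S|)$ bits but independent of any $|c_i|$), so that under the standard word-RAM / expected-time-hashing assumption each dictionary operation is $O(1)$. If one wishes to avoid hashing, I would note that one can instead sort the at most $|S|$ identifiers in $O(|S| \log |S|)$ time using radix sort on the fixed-length tuples, which still falls comfortably within the overall $O(\sum_i |c_i|)$ target up to a logarithmic factor; under the standard model used throughout the paper the cleaner $O(|S|)$ hashing bound suffices and the sum of the three stages gives the desired bound.
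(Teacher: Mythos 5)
Your proposal is correct and follows essentially the same route as the paper's own proof: a stage-by-stage analysis in which stage 1 directly costs $O(\sum_i \lvert c_i \rvert)$ and also bounds $\lvert S \rvert$, after which stages 2 and 3 are each linear in $\lvert S \rvert$. Your added care about the stage-3 dictionary (constant-size keys, expected $O(1)$ hashing) only makes explicit what the paper asserts as ``an $O(1)$ operation.''
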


\begin{proof}
This follows rather directly.  Stage 1 of $Construct$ iterates through all 
bits of all strings in the corpus set.  For each string, it begins at the 
start state of the current Moore machine, and then for each bit, it (a) 
adds or removes at most a constant number of states, (b) modifies a constant 
number of transitions, and (c) makes a single transition.  Stage 1 therefore 
takes $O(\sum_{i=1}^{n}\lvert c_i \rvert)$ time; likewise, $S$ has at most 
$O(\sum_{i=1}^{n}\lvert c_i \rvert)$ states.  Stage 2 is a simple state BFS 
enumeration of the states of $S$, therefore taking time 
$O(\sum_{i=1}^{n}\lvert c_i \rvert)$.  Stage 3 is a pass through the 
nodes in the order of the enumeration determined in stage 2; in each iteration,
an $O(1)$ operation is performed.  Thus, the entire procedure takes time  
$O(\sum_{i=1}^{n}\lvert c_i \rvert)$.

$\square$.
\end{proof}

\begin{theorem}[Runtime of Scheme 1 Encode]
The $Encode(D,C,A,c_i)$ operation of Scheme 1 completes in time 
$O(\lvert c_i \rvert)$.
\end{theorem}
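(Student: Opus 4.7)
The plan is to decompose the work performed by $Encode$ into a constant-cost preamble (lines 2--4), a loop (lines 5--10) whose body performs $O(1)$ work, and a constant-cost postamble (lines 11--12). Since the for loop iterates exactly $|c_i|$ times, this decomposition will yield the claimed $O(|c_i|)$ bound directly.

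First I would bound the work inside each iteration of the loop. At any state $s$ of a schematic $D$ produced by $Construct$, the transition function $T$ admits at most four outgoing options from $s$: one each for input symbols $0$, $1$, and $\perp$, plus the unconditional transition. Under the natural assumption that each state stores its outgoing transitions in a fixed-size record indexed by input symbol, the conditional check on line 6 (does $s$ have \emph{only} an unconditional edge to a state outputting $1-c_i[j]$), the check on line 7 (does $s$ have no transition to any successor outputting $c_i[j]$), the selection of a symbol $b$ whose successor outputs $c_i[j]$ together with the concatenation $x \gets x \mid\mid b$ on line 8, and the transition $s \gets s(b)$ on line 9 are all $O(1)$ operations. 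Thus the body of the loop is $O(1)$ per iteration.

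Second, I would observe that lines 2--4 (checking $S_0.\mathit{out}$, initializing $x$ to the empty string, and setting $s \gets S_0$) and lines 11--12 (verifying the existence of a $\perp$-transition and returning $x$) each perform a constant amount of work. Summing contributions gives a total runtime of $O(1) + |c_i| \cdot O(1) + O(1) = O(|c_i|)$, as required.

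The only mild subtlety — which I do not expect to be a real obstacle — is justifying constant-time transition lookup. Because $\Sigma = \{0,1,\perp\}$ is of fixed size and the unconditional transition occupies a single additional slot, each state requires only a constant-size transition record, and $O(1)$ per-symbol lookup is achievable; this is implicitly assumed by the scheme's description, and making it explicit in the proof should suffice to close the argument.
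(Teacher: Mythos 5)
Your proposal is correct and follows essentially the same approach as the paper's own proof: both argue that $Encode$ performs a constant amount of work per bit of $c_i$ (constant-time checks, at most one bit appended to $x$, at most one transition taken), yielding $O(\lvert c_i \rvert)$ overall. Your additional remark justifying constant-time transition lookup via the fixed-size alphabet $\Sigma = \{0,1,\perp\}$ is a reasonable explicit filling-in of a detail the paper leaves implicit, but it does not change the argument.
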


\begin{proof}
This also follows directly.  Note that $Encode$ simply iterates through 
each bit of $c_i$, during each iteration making a constant number of (constant-time)
chehcks, appending at most 1 bit to a single string, and taking at most a 
single transition in $D$.  The entire procedure thus takes time $O(\lvert c_i \rvert)$.

$\square$.
\end{proof}

\begin{theorem}[Runtime of Fixed Decoding Procedure]
The fixed decoding procedure $Decode(D,x)$ completes in time $O(\lvert c_i^*\rvert)$, 
where $c_i^*$ is the string that $x$ encodes if it encodes a string or $x$ if it 
encodes none.
\end{theorem}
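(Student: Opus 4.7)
The plan is to bound the runtime by first establishing that the per-step cost of $Decode$ is $O(1)$ and then counting how many state transitions the procedure performs before halting. At each state $s$ along the trace, the procedure inspects at most one upcoming input symbol, chooses the appropriate successor using $T$, and emits $G(s)$; all of these operations are constant-time under a reasonable encoding of the Moore machine $D$. Hence the total runtime is asymptotically the number $K$ of states visited before $Decode$ halts, and the task reduces to showing $K = O(|c_i^*|)$.

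I would then split into the two cases that define $c_i^*$. In the valid case, $x$ encodes some $c_i \in C$, so (by the $0$-correctness argument already established for Scheme 1 via Lemmas \ref{ConstructL1}--\ref{ConstructL3}) the trace is a root-to-$\perp$-leaf path whose output bits are exactly $c_i$. Every visited non-start state emits one output symbol, so $K = |c_i| + O(1) = |c_i^*| + O(1)$, giving runtime $O(|c_i^*|)$. In the invalid case, $c_i^* = x$, and $Decode$ halts either (i) when the input is exhausted at a state lacking an unconditional or $\perp$ successor, or (ii) when the next input bit $b$ labels no outgoing transition. Each $0/1$-labeled transition along the trace consumes one bit of $x$, so the $0/1$-labeled transitions alone contribute at most $|x|$ to $K$, and the rest of the trace consists of unconditional (or $\perp$) transitions whose count I must bound.

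The main obstacle is controlling these unconditional transitions in the invalid subcase, since by definition they advance the machine without consuming $x$. The key observation I would invoke is structural: in the restricted schematics we consider, unconditional transitions arise only from uniquely determined continuations of a prefix (they collapse into the conditional branch at the next decision junction by the construction of Scheme~1 and, more generally, by the minimality arguments of Lemma \ref{ConstructL5}). Consequently every maximal unconditional run either ends at a junction which then consumes a bit of $x$ (charging it to that bit) or ends at the terminal/dead-end that triggers the halt (charging a final $O(1)$). This amortization yields $K \le |x| + O(1) = O(|c_i^*|)$ in the invalid case, and combining both cases completes the proof.
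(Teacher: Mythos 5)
Your per-step accounting and your valid-case argument are fine and line up with the paper's (very terse) treatment: when $x$ encodes $c_i$, the trace is a single root-to-leaf execution that emits $c_i$, so the runtime is $O(\lvert c_i\rvert)$. The gap is in your invalid-case amortization. You charge each maximal run of unconditional transitions to the bit of $x$ consumed at the junction that terminates the run and conclude $K \le \lvert x\rvert + O(1)$, but that conclusion requires each such run to have length $O(1)$, and nothing in the construction guarantees this. A corpus string whose suffix is shared with no other string yields a chain of up to $z$ consecutive unconditional states (take $C=\{0^z,\,1\}$: after the single decision junction at $S_0$, the $0$-branch is a chain of $z-1$ unconditional transitions), so one charged bit can absorb $\Theta(z)$ states and your scheme only gives $K \le \lvert x\rvert\cdot z + O(1)$. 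Worse, the bound you are trying to prove is not actually achieved by the fixed decoder as specified: it aborts only upon finding a next input symbol with no successor state, and unconditional transitions do not consume input, so for an over-long $x$ (say $x=00$ in the example above) the machine walks the entire unconditional chain down to the leaf before discovering the leftover symbol. The trace then visits $\Theta(z)$ states while $\lvert x\rvert = O(1)$; the honest bound for the invalid case is the length of the output prefix produced before aborting, which can be $\Theta(z)$ rather than $O(\lvert x\rvert)$.

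To be fair, the paper's own proof dismisses this case with the single sentence that some prefix of the bits of $x$ is read before the procedure aborts, which bounds the number of input symbols consumed but not the number of states visited; your attempt to make the argument rigorous is precisely what exposes the problem. A repair would either restate the invalid-case bound in terms of the output emitted before the abort (equivalently, the length of the relevant corpus string), or modify $Decode$ to abort eagerly, e.g. as soon as a decision junction is reached with the input exhausted or a leaf is reached with input remaining, tracked alongside each consumed bit; under such a modification your charging argument still needs a separate bound on the one trailing unconditional run that precedes the abort, since that run alone can have length $\Theta(z)$.
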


\begin{proof}
When $x$ encodes a string, the $Decode$ procedure is simply an execution of $D$ 
which outputs the string $c_i$ which $x$ encodes.  By the structure of Moore machines, 
this thus takes time $O(\lvert c_i\rvert)$.  When $x$ does not encode a string,
some prefix of bits of $x$ are read before the procedure aborts.

$\square$.
\end{proof}

\paragraph{A Note on Optimality and a Worked Example}

We've seen thus far that Scheme 1 is a $(0,\lceil \log{n} \rceil,0)$-CCSS.  From Theorem 
\ref{CodeLenLB}, we know that the minimum possible code length for any 
corpus set and any CCSS is is bounded from below by $\log (n+1) - 1$.  By giving 
Scheme 1 with $p(n,z)=\lvert \log{n} \rvert$, we have therefore provided a CCSS admitting a streaming 
code length only 
additive factor of $\lceil \log{n} \rceil - (\log{(n+1)}-1) \leq \log{\frac{n}{n+1}} + 2$ from optimal.  Noting 
that this factor tends towards $2$ for large $n$ and is strictly less than $2$ for 
all other $n$, we also see that this scheme is nearly optimal with respect to 
streaming code length in a very strong sense.

The remainder of this section provides and illustrates a worked example of Scheme 1 
in order to motivate questions surrounding how to extend the scheme.  Consider the 
corpus set consisting of the vowels in the English alphabet (excluding y):

\[ \mathcal{C} = \{a,e,i,o,u\} \]

\noindent For the sake of exposition, say that each of these vowels are given 
using a naive alphabetical encoding where any letter is represented by its 
ordinal position in the English alphabet.  We may view our corpus set now as

\begin{align*}
\mathcal{C} = \{&a = 1 = &0b00001 \\
&e = 5 = &0b00101 \\
&i = 9 = &0b01001 \\
&o = 15 = &0b01111 \\
&u = 21 = &0b10101 \}
\end{align*} 

\noindent Let us say now that we run $Construct(C)$ (for the $Construct$ procedure 
of Scheme 1) according to the given pseudo-code.  We depict below the Moore 
machine schematic that would be obtained via this procedure:

\begin{figure}[H]
    \centering
    \includegraphics{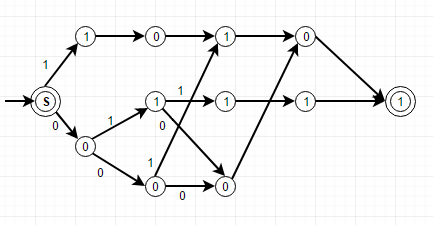}
    \caption{Schematic for C, Output of Scheme 1 Construct}
    \label{fig1}
\end{figure}

As can be seen in Figure \ref{fig1} above, the schematic for $\mathcal{C}$ in Scheme 
1 is an acyclic Moore machine in 11 states.  As promised by the validity of 
scheme 1 as a $(0,\lceil \log{n} \rceil, 0)$-CCSS, we have that we may now convey 
vowels in less than or equal to $\lceil \log{n} \rceil = 3$ bits according to the 
following stream encoding:

\begin{align*}
a &: 000 \\
e &: 001 \\
i &: 010 \\
o &: 011 \\
u &: 1
\end{align*}

\noindent As we see in the case of this example, scheme 1 portrays the properties 
of a CCSS that we desired in order to address the Spotify problem: we see a 40\% 
reduction in the bandwidth required to express vowels under the given naive encoding
without needing to explicitly store the encodings for all vowels.
\newline\newline
The reader may perhaps have noticed at this point that there are yet states in 
\ref{fig1} that may be merged to obtain a smaller schematic whilst maintaining 
the acyclicity property of the restriction.  We illustrate such a merge below:

\begin{figure}[H]
    \centering
    \includegraphics{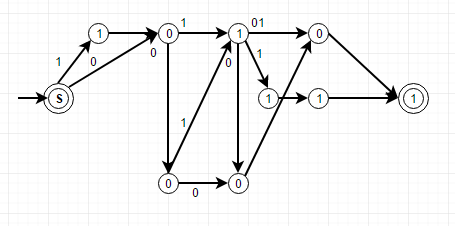}
    \caption{Modified Schematic}
    \label{fig2}
\end{figure}

\noindent Figure \ref{fig2} shows that we may merge two states to obtain 
a smaller schematic (9 states).  Note, however, that this reduction in size 
increases the number of paths through the machine, meaning that stream encodings
must convey more information.  In particular, notice that the above schematic 
contains a path encoding $42=0b101001$, not a vowel (not even a letter 
in our naive encoding); indeed, this schematic 
no longer satisfies $\lceil \log{n} \rceil$-compression, as the longest encoding 
required for any vowel increases from 3 to 5:

\begin{align*}
a &: 000 \\
e &: 00101 \\
i &: 010 \\
o &: 011 \\
u &: 1101
\end{align*}

\noindent While it is certainly true that this schematic no longer satisfies 
the definition of a $(0, \lceil \log{n} \rceil, 0)$-CCSS, this modification shows 
the benefit of increasing $p(n,z)$ for AMM-restricted schemes: if we increase the 
maximum encoding length, we may reduce the size of the schematic.  This phenomenon 
motivates our study of the case where $p(n,z) > \lceil \log{n} \rceil$.

\subsubsection{On the Hardness of Maintaining Minimality for a $p(n,z)>\lceil \log{n} \rceil$ AMM-restricted CCSS}

The previous section portrays very clearly the advantage of increasing maximum 
streaming code length from the near-optimal point of $\lceil \log_{n} \rceil$: 
by doing so, we may reduce the size of the schematic.  Though this is an attractive
prospect, we will show that this is in fact NP-hard to do while maintaining strict
schematic minimality in virtually all interesting cases.  As in the remainder of 
this work, we are considering strictly 0-correct CCSS constructions.

Specifically, we will 
show that (i) it is NP-hard to give an AMM-restricted CCSS maintaining 0-minimality for 
unbounded maximum code length $p(n,z)$ and the stronger result that (ii) 
it is NP-hard to give an AMM-restricted CCSS maintaining 0-minimality for maximum code length
$p(n,z)=\frac{z}{\beta}$ for any fixed $\beta \geq 1$.  We will then extend 
our discussion to consider the maintenance of $\delta$-minimality for general 
$\delta > 0$, formulating a manner in which to relax CCSS constraints 
to allow us to give CCSS constructions which still perform well in practice.

\begin{theorem}[AMM-restriction Hardness for Unbounded Streaming Codes]
\label{UnboundedHardness}
It is NP-hard to give a 0-minimal, 0-correct AMM-restricted CCSS with unbounded 
streaming code length ($p(n,z)=\infty$).
\end{theorem}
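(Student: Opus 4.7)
The plan is to establish NP-hardness via a polynomial-time reduction from a known NP-hard decision problem to the decision version of 0-minimal AMM-restricted construction. I would first formalize the decision version: given corpus $C$ and integer $k$, does there exist an acyclic Moore machine satisfying the AMM restriction with at most $k$ states whose set of generated strings (over all root-to-$\perp$ paths) equals $C$? Because $p(n,z)$ is unbounded, every AMM generating $C$ is a candidate schematic, so minimality reduces to a pure state-counting question, and the absence of any code-length constraint means the entire combinatorial difficulty must come from the state-merging problem itself.

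Next I would select a source NP-hard problem and design a corpus gadget. Natural candidates are \textbf{Minimum Consistent DFA} (Gold), \textbf{Set Cover}, or \textbf{Vertex Cover}. Set Cover is especially appealing because the choice of which states to merge in an AMM closely mirrors the choice of which subsets to include in a cover: merging states corresponds to ``covering'' certain substring contexts while preserving the overall generated language. Concretely, I would associate to each subset $S_j$ a family of corpus substrings whose sharing of a single Moore state is precisely what ``selects'' $S_j$, and to each universe element $u$ a distinguishing suffix that forces at least one selected $S_j$ containing $u$ to actually share its state in the constructed AMM.

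The correctness argument would have two directions. In the easy direction, from a cover of size $k$ I would construct an AMM of size $f(k)$ explicitly by performing the prescribed merges and verifying that no spurious output strings outside $C$ arise. In the hard direction, from any AMM of size at most $f(k)$ I would extract a cover by inspecting the merge pattern, leaning on Lemma \ref{ConstructL5} as the key structural tool: a minimum AMM is characterized by the absence both of isomorphic subgraphs and of distinct states whose root-prefixes share an output, so any sufficiently small AMM must exhibit a very specific sharing pattern that decodes to a valid cover.

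The principal obstacle will be designing the gadget so that simultaneously (i) every legitimate merge in the AMM corresponds to a valid choice in the source problem, and (ii) no ``extraneous'' merges can either lower the state count without encoding a solution or introduce strings outside $C$ (which would break 0-correctness). Unconditional transitions, which have no direct analogue in standard DAWG minimization, are the key source of combinatorial freedom: their placement can route multiple inputs through a common output subgraph, so pinning down their optimal placement is precisely what must be made to encode the hard combinatorial choice. Making the gadget rigid enough that the optimal AMM is essentially unique up to the solution of the source instance, while flexible enough that a small AMM actually exists whenever the source instance is a ``yes,'' is where the real work of the proof lies.
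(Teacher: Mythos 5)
Your proposal stops short of a proof at exactly the point where the proof has to happen: you correctly set up the decision version and observe that unbounded $p(n,z)$ turns minimality into pure state-counting, but the Set Cover gadget --- which you yourself identify as ``where the real work of the proof lies'' --- is never constructed, and there is no evidence offered that such a gadget exists. Worse, the structural tool you plan to lean on, Lemma \ref{ConstructL5}, characterizes minimality only among machines in which no two distinct states are reached by root-paths with the same output; that is precisely the regime of the $\lceil\log n\rceil$-code-length scheme, not the unbounded regime. With $p(n,z)=\infty$ the optimal AMM is generally \emph{smaller} than any machine satisfying the hypotheses of that lemma, because it is free to merge states in ways that introduce many spurious paths (this is exactly the phenomenon illustrated by the paper's Figure 2 example), so a ``sufficiently small AMM'' need not exhibit the rigid sharing pattern your extraction step requires.

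The missing idea is that in the unbounded regime the problem collapses onto Shortest Common Supersequence, which is the paper's route. Given a 0-minimal AMM schematic $D$ for corpus $I$, topologically order its states and concatenate the $0/1$ output symbols: each $c_i$ is output along a path that respects the topological order, so the resulting string $s$ is a common supersequence of $I$. Conversely, any common supersequence $s'$ yields a valid AMM with one state per bit of $s'$ (plus the $\perp$ bookkeeping states), by adding, for each $c_i$, transitions that follow an embedding of $c_i$ into $s'$; unbounded code length means the resulting explosion of paths costs nothing. Hence the number of $0/1$ states in a 0-minimal AMM equals the SCS length exactly, and 0-minimal construction solves SCS over a binary alphabet, which is NP-hard. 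This correspondence is what makes the reduction one paragraph rather than a gadget-engineering project; without it, your plan is a research program, not a proof.
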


\begin{proof}
We show the hardness of this problem via reduction from the well-known NP-hard 
\textit{shortest common supersequence} (SCS) problem for binary alphabets
(shown to be NP-complete in 
\cite{SCSHard}).  The SCS problem is as follows: given a set of $n$ strings composed 
of letters from a fixed (binary, in our case) alphabet, 
determine the shortest possible string $s$ such that each string in the input 
set is a subsequence of $s$.
\newline\newline
The reduction from SCS to AMM-restricted $(0,\infty,0)$-CCSS is direct.  Say 
that we have a $(0,\infty,0)$-CCSS $S=(Construct, Encode, Decode)$.  Consider 
now an instance $I$ of the SCS problem: $I=\{c_1,...,c_n\}$.
As this notation implies, take now $I$ as our corpus set $\mathcal{C}$ and run
$Construct(\mathcal{C}=I)$ to obtain an AMM schematic $D$.  

By the definition 
of an AMM-restricted CCSS, $D$ is a Moore machine whose underlying directed 
graph is acyclic.  Since this graph is acyclic, we may enumerate the states 
of $D$ in topological order $O$.  Take $s$ as the string formed by taking the 
output symbols of each 1- or 0- state in the same order as $O$.  We claim now that 
$s$ is a shortest common supersequence of the original instance $I$, and we show 
this in two parts:

\begin{enumerate}
\item{$s$ is a supersequence of all strings in $I$.  Take any string $c_i \in I$.  
Since $D$ is an AMM schematic 
for $\mathcal{C}=I$, there exists a path of states through which $D$ outputs 
$c_i$.  The $j$th state along this path outputs the $j$th bit of $c_i$, and the 
edges along the path which outputs $c_i$ must obey the topological ordering, meaning 
there is a subsequence of states in the topological ordering $O$ which outputs 
$c_i$.  Taking $s$ as defined, there is therefore a subsequence of bits in $s$ 
which is equal to $c_i$; $s$ is therefore a supersequence of $c_i$. 
}
\item{$s$ is as short as any other supersequence of the strings in $I$.  Assume 
not.  Then there exists a string $s'$ shorter than $s$.  We will use this string 
$s'$ to construct a new Moore machine $D'$: for each bit in $s'$, introduce a new 
state which outputs the same bit.  Next, for each string $c_i \in I$, create a 
path through these states by adding transitions between states $E_j$ and $E_{j+1}$, 
respectively corresponding to the $j$th and the $j+1$th bits of $c_i$, 
for all $j=1,...\lvert c_i\rvert - 1$.  By adding start and end states 
with the special symbol $\perp$ according to the convention we've seen thus far, 
we obtain a valid Moore machine $D'$ which is a valid schematic for $I$.  But $D'$
has as many states as there are bits in $s'$, therefore fewer states than bits 
in $s$ and thus fewer states than $D$, contradicting the 0-minimality of our scheme.
}
\end{enumerate}

$\square$.
\end{proof}

\begin{theorem}[AMM-restriction Hardness for $\frac{z}{\beta}$-bounded Streaming Codes]
\label{BoundedHardness}
It is NP-hard to give a 0-minimal, 0-correct AMM-restricted CCSS with streaming 
code length bounded by $p(n,z)=\frac{z}{\beta}$ for any fixed $\beta \geq 1$.
\end{theorem}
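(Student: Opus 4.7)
The plan is to reduce once more from the binary shortest common supersequence (SCS) problem, this time prepending a long common prefix to render the $z/\beta$ code-length bound non-restrictive. Given an arbitrary SCS instance $I = \{c_1,\ldots,c_n\}$ with $z_0 = \max_i |c_i|$, construct
\[ I' \;=\; \{\, 0^M c_1,\; 0^M c_2,\; \ldots,\; 0^M c_n\,\},\qquad M \;=\; \beta\, n\, z_0, \]
which is polynomial in the input size for fixed $\beta$, and let $z' = z_0 + M$ be the maximum string length in $I'$.

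First I would establish $|SCS(I')| = M + |SCS(I)|$. The upper bound is witnessed by $0^M \cdot SCS(I)$. For the lower bound, given any common supersequence $t$ of $I'$ with embeddings $\pi_i$ of each $0^M c_i$ into $t$, an exchange argument routes every $\pi_i$ so that its $M$ prefix zeros occupy the first $M$ zero-positions $q_1 < \cdots < q_M$ of $t$: if the $j$\textsuperscript{th} prefix zero of $\pi_i$ sits at some $p > q_j$, then $q_j$ cannot be used anywhere else in $\pi_i$ (all later positions in $\pi_i$ exceed $p > q_j$) and is free to replace $p$. After this rerouting, $t_{q_M+1}\cdots t_{|t|}$ is a common supersequence of $I$, so $|t| \geq q_M + |SCS(I)| \geq M + |SCS(I)|$.

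Next, Theorem \ref{UnboundedHardness}'s correspondence between AMM schematics and common supersequences gives a minimum-state AMM schematic for $I'$ of size $|SCS(I')| + O(1) = M + |SCS(I)| + O(1)$, realised by an unbranching chain of $M$ states for the common prefix followed by an SCS($I$)-based sub-schematic. Only the sub-schematic contributes non-unconditional transitions, so the code length along any path is at most $|SCS(I)| \leq n z_0$. With our choice of $M$,
\[ \frac{z'}{\beta} \;=\; \frac{z_0 + M}{\beta} \;=\; \frac{z_0}{\beta} + n z_0 \;\geq\; n z_0 \;\geq\; |SCS(I)|, \]
so this minimum-state schematic is feasible under the $z/\beta$ code-length bound. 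The minimum state count subject to the bound therefore equals the unconstrained minimum $M + |SCS(I)| + O(1)$, and because the hypothesised $(0, z/\beta, 0)$-CCSS is 0-minimal, its $Construct(I')$ returns a schematic $D'$ of exactly this size. The reducer extracts $|SCS(I)| = |D'| - M - O(1)$ in polynomial time, deciding SCS.

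The main obstacle is the code-length accounting: I must certify that the SCS($I$)-based sub-schematic is realisable as a valid Moore machine (whose states admit only $0$-, $1$-, $\perp$-, and unconditional outbound transitions) while keeping the code length along every path at most $|SCS(I)|$. If some state would otherwise require more than two non-$\perp$ successors, auxiliary intermediate states may be needed; I would need to show that their count is strictly dominated by $M$ and that the per-path branch count stays safely below $z'/\beta$, so that both the state-count equality and the code-length feasibility survive the construction.
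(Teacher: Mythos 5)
Your proposal is essentially correct but takes a genuinely different route from the paper's. The paper does not attempt an exact reduction: it pads $I$ with a single long all-zeros string kept on its own output path, shows that a $(0, \frac{z}{\beta}, 0)$ AMM-restricted CCSS then yields a polynomial-time $(\beta+1)$-approximation algorithm for binary SCS, and invokes the Jiang--Li result that constant-ratio approximation of SCS is NP-hard. You instead prepend a common prefix $0^M$ to every string, prove via a leftmost-embedding exchange argument that $\lvert SCS(I')\rvert = M + \lvert SCS(I)\rvert$, and argue that the padding renders the $\frac{z'}{\beta}$ code-length bound vacuous, so the constrained minimum schematic size coincides with the unconstrained one and the exact value of $\lvert SCS(I)\rvert$ can be read off the output of $Construct(I')$. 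Your version is more elementary --- it needs only NP-completeness of binary SCS, not its inapproximability --- and it decides SCS exactly rather than merely approximating it, which is a somewhat cleaner way to reach the same conclusion. Two details remain to be nailed down. First, the obstacle you flag (realizing a supersequence $s$ as a deterministic Moore machine) dissolves if you route every string by its leftmost embedding: give the state for position $j$ of $s$ a $0$-transition to the next $0$-position, a $1$-transition to the next $1$-position, and a $\perp$-transition to a single shared final leaf; out-degree never exceeds three, no auxiliary states are needed, and the code for $0^M c_i$ has length at most $\lvert c_i\rvert + 1 \leq \frac{z'}{\beta}$. Second, for an exact decision reduction the additive ``$+O(1)$'' must be pinned to an explicit constant (it is the number of $\perp$-output states, which a minimum machine takes to be the start state plus one merged final leaf), since otherwise $\lvert SCS(I)\rvert = \lvert D'\rvert - M - O(1)$ does not resolve the query ``is $\lvert SCS(I)\rvert \leq k$?''. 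Both fixes are routine, so I would classify these as unfinished details rather than gaps.
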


\begin{proof}
Our proof relies on a result by Jiang and Li in \cite{SCSResults} which states 
that it is NP-hard to approximate the SCS problem with a constant approximation 
ratio.  In particular, we show that the existence of a $(0, \frac{z}{\beta}, 0)$ AMM-restricted 
CSS yields a polynomial-time 
$(\beta+1)$-approximation algorithm for the SCS problem for any $\beta \geq 1$; 
the truth of this theorem will then follow from Jiang and Li's result.
\newline\newline
Say that we have a $(0, \frac{z}{\beta}, 0)$ AMM-restricted CSS $S$ 
for some $\beta \geq 1$. Consider an instance $I=\{c_1,...,c_n\}$ of the SCS problem 
(for binary alphabets).  We use $S$ to approximate the shortest common supersequence
of $I$ using the following polynomial-time method:

\begin{enumerate}
\item{Add to $I$ an all-0 string of polynomial length $\beta(z \log_2{\sum_{i=1}^{n}\lvert c_i \rvert} + 1)$, where $z$ is (as before)
the length of the longest string in $I$, obtaining a new set of strings $I'$.}
\item{Run $Construct(I')$ to obtain an AMM schematic $B$.}
\item{Obtain a supersequence $s$ of the strings in $I$ by using the topological 
ordering method from Theorem \ref{UnboundedHardness}, except this time traversing
only the states which are along the output path of some string in the original 
set $I$.  Return $s$.}
\end{enumerate}   

By the proof of Theorem \ref{UnboundedHardness}, we know that $s$ is at least a 
supersequence of all the strings in $I$.  All that remains to show is the optimality
gap between $s$ and the true shortest common supersequence of $I$.  Note that the 
states of the AMM schematic $B$ may be partitioned into a group of states which 
(a) are along the output path for some $c_i \in I$ but not the path for the string 
added to obtain $I'$ or (b) are along the output path for the long string added to 
$I$.  Let the former be $\bar{D}$, and let the latter be $Z$.  Trivially, 
we have $\lvert B \rvert = \lvert \bar{D} \rvert + \lvert Z \rvert$.  

Note now that the length of $s$ corresponds to the number of states 
enumerated in the topological ordering step: this is equal to all states 
along an output path for some $c_i \in I$, including those also along the output path 
for the added string, $\hat{D}$: $\lvert \hat{D} \rvert = \lvert \bar{D} \rvert + \lvert \hat{D} \cap Z \rvert$.  
The optimal SCS $s^*$ corresponds (by our proof of Theorem \ref{UnboundedHardness}) 
to the number of states in the would-be schematic of a AMM-restricted 
$(0, \infty, 0)$-CCSS, $D^*$ given by $Construct(I)$.

Because our CCSS is 0-minimal, we know that $\lvert B \rvert$ is minimized; 
additionally, since $\lvert Z \rvert$ is fixed, we know that $\lvert \bar{D} \rvert$
is minimized.  We note now that $D^*$ may be converted to an AMM schematic for $B'$ by 
simply adding one state per bit of the added string.  This is true because each state 
in $D^*$ has at most $\lvert D^* \rvert \leq \sum_{i=1}^{n}\lvert c_i \rvert$ transitions, meaning that the streaming code gives at most $\log_2{\sum_{i=1}^{n}\lvert c_i \rvert}$ bits of information per transition, therefore that, for the minimum $(0, \infty, 0)$ schematic, at most $z\log_2{\sum_{i=1}^{n}\lvert c_i \rvert}$ bits are needed
to represent any string: since our effective $z$ in $I'$ is $\beta(z\log_2{\sum_{i=1}^{n}\lvert c_i \rvert}+1)$, $z/\beta$ (our maximum encoding length) is strictly larger than this quantity.
Moreover, adding 0-states as mentioned creates exactly one additional choice 
at the start node (hence the $+1$ term), and the resulting schematic $B'$ for $I'$
intersects with precisely no states along the output path for strings in $I$ in 
$D^*$.  

To continue, given that $D^*$ may be used to obtain an AMM schematic for $I'$ in 
which the states along the output paths of strings in $I$  
intersect with no states along the output path corresponding to the added string, 
we know that 
$\lvert \bar{D} \rvert \leq \lvert \lvert D^* \rvert$ by the $0$-minimality 
of our CCSS, thus that 
$\lvert \bar{D} \rvert + \lvert \hat{D} \cap Z \rvert \leq 
\lvert D^* \rvert + \lvert \hat{D} \cap Z \rvert$.  Furthermore, because 
$\lvert Z \rvert = z\beta \leq \beta \lvert D^* \rvert$, we can state 
$\lvert \bar{D} \rvert + \lvert \hat{D} \cap Z \rvert \leq 
\lvert D^* \rvert(\beta + 1)$.  Applying this, we see then that the approximation ratio 
of this algorithm, given by 

\[ \frac{\lvert \bar{D} \rvert + \lvert \hat{D} \cap Z \rvert}{ \lvert D^* \rvert} \]

\noindent is in fact 

\[ \leq \frac{\lvert D^* \rvert(\beta + 1)}{ \lvert D^* \rvert} = \beta + 1 \]

$\square$.
\end{proof}

\paragraph{Considering $\delta > 0$-Minimality}

We saw in Theorems \ref{UnboundedHardness} and \ref{BoundedHardness} that it 
is NP-hard to maintain 0-minimality in an AMM-restricted CCSS when the streaming 
code length is either unbounded or bounded as a constant fraction of the length 
of the longest string in a corpus set.  A natural question to ask now is this:
does the problem remain hard when we relax the $\delta$ parameter?  I.e.,
is it hard to maintain $\delta$-minimality when varying code length bound?

We give a result in this section which indicates that the answer to this question 
is yes.  We specifically rely on yet another result from 
\cite{SCSResults} to show that there exists a natural floor to the maximum value 
of $\delta$ below which unexpected complexity-theoretic results are implied.

\begin{theorem}[Minimality Floor of AMM-restricted Schemes]
\label{MinimalityRoof}
For every $\beta \geq 1$, there exists an $\alpha$ such that the existence of a 
$(\log^\alpha{n}, \frac{z}{\beta},0)$-CCSS implies that 
$NP \subseteq DTIME(2^{polylog(n)})$.
\end{theorem}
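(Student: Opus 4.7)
The plan is to extend the reduction from Theorem \ref{BoundedHardness}, carry the $\delta$-minimality slack through the same sizing argument, and then invoke the polylogarithmic inapproximability result of Jiang and Li \cite{SCSResults} for the shortest common supersequence problem on binary alphabets.

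First, I would execute the reduction from Theorem \ref{BoundedHardness} verbatim: given an SCS instance $I = \{c_1,\ldots,c_n\}$, pad it with a single all-zero string of length $\beta(z \log_2 \sum_i |c_i| + 1)$ to form $I'$, invoke $Construct(I')$ on the assumed $(\log^\alpha n, z/\beta, 0)$-CCSS to obtain an AMM schematic $B$, and return the supersequence $s$ read off from a topological ordering of those states of $B$ that lie along the output path of some $c_i \in I$. As before, $s$ is a common supersequence of $I$.

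Next, I would redo the sizing argument of Theorem \ref{BoundedHardness} with $\delta = \log^\alpha n$ in place of $\delta = 0$. Decomposing $B = \bar D \sqcup Z$ exactly as in that proof, any optimal $(0,\infty,0)$-schematic $D^*$ for $I$ extends to a legal schematic for $I'$ of size $|D^*| + |Z|$ by appending one state per bit of the added string, so $\delta$-minimality of $B$ yields $|B| \leq (1+\log^\alpha n)(|D^*| + |Z|)$. Combined with the bound $|Z| \leq \beta |D^*|$ used in Theorem \ref{BoundedHardness}, this gives length of $s$ at most $(1+\log^\alpha n)(\beta+1)|D^*|$. The overall procedure is therefore a polynomial-time approximation for binary SCS with ratio $(1+\log^\alpha n)(\beta+1) = O(\log^\alpha n)$.

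Finally, I would appeal to the inapproximability result of \cite{SCSResults}, which provides a constant $\gamma > 0$ such that approximating SCS over a binary alphabet within a factor of $\log^\gamma N$ (for input size $N$) in polynomial time implies $NP \subseteq DTIME(2^{polylog(N)})$. Because the padded instance $I'$ has size polynomial in that of $I$, the two polylogarithms $\log n$ and $\log N$ agree up to a constant factor, so choosing any $\alpha < \gamma$ forces $(1+\log^\alpha n)(\beta+1) < \log^\gamma N$ for all sufficiently large $n$, yielding the claimed collapse. The main obstacle is calibrating $\alpha$ against the precise quantitative exponent $\gamma$ delivered by \cite{SCSResults} and verifying carefully that the polylogarithmic factors compare in the intended direction under the padding, since any constant hidden in the $O(\log^\alpha n)$ ratio must be absorbed into the gap $\gamma - \alpha$ for the argument to go through.
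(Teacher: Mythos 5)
Your proposal follows essentially the same route as the paper: rerun the padding reduction of Theorem \ref{BoundedHardness}, propagate the $\delta=\log^\alpha n$ slack through the state-counting argument to get an $O(\log^\alpha n)$-ratio polynomial-time SCS approximation, and then invoke the Jiang--Li polylogarithmic inapproximability bound to choose $\alpha$ below their constant $\gamma$. The only difference is bookkeeping---you apply the $(1+\delta)$ factor multiplicatively to all of $\lvert B\rvert$ where the paper applies $f(n)$ only to the $\bar{D}$ term, yielding $(1+\log^\alpha n)(\beta+1)$ versus $f(n)+\beta$---which is immaterial for fixed $\beta$.
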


\begin{proof}
Our proof relies on the following result of Jiang and Li \cite{SCSResults}:
there exists a constant $\gamma > 0$ such that, if SCS has a polynomial-time 
approximation algorithm with ratio $O(\log^\gamma{n})$, then 
$NP \subseteq DTIME(2^{polylog(n)})$.  We then need only to show that, for 
generic $\beta$, the existence of a $(f(n), \beta, 0)$-CCSS implies the 
existence of an $O(f(n)+\beta)$-ratio SCS approximation algorithm.  The 
theorem then follows directly from the result of Jiang and Li.
\newline\newline
That a $(f(n), \beta, 0)$-CCSS yields an $O(f(n)+\beta)$-ratio SCS approximation 
algorithm follows almost exactly according to the argument given in the proof 
of Theorem \ref{BoundedHardness} with the following exception: where $\bar{D}$ 
is the set of states along the output paths of the strings of $I$ and 
$D^*$ is the set of states in the minimal AMM schematic, we use the 
fact that $\lvert \bar{D} \rvert \leq \lvert D^* \rvert$ to instead obtain 
that $f(n) \lvert \bar{D} \rvert \leq f(n) \lvert D^* \rvert$.  Introducing 
$D'$ as the corresponding set of states for the schematic returned by the 
$f(n)$-minimal scheme, we can (using the same strategy in the proof of 
Theorem \ref{BoundedHardness}) show that 
$\lvert \bar{D}' \rvert + \lvert \hat{D}' \cap Z \rvert 
\leq f(n) \lvert D^* \rvert + \beta \lvert D^* \rvert$.  The approximation ratio 
is immediate.

$\square$.
\end{proof}

\paragraph{Relaxing the Problem: from $f(\cdot)$-Minimality to $\mathbb{E}[f(\cdot)]$-Minimality}

This section overall has shown that analysis of AMM-restricted 
CCSS constructions through a strict interpretation of minimality 
reveals very unyielding hardness results.  Theorems \ref{UnboundedHardness}
and \ref{BoundedHardness} show us that it is hard to give CCSS constructions 
bounding code length as a fraction of maximum string length; indeed, this hardness
will present itself once more if we instead consider code-length bounds 
$p(n,z)$ as a function of $n$ on instances where $n << z$, and so the prospect of
exploring this possibility seems equally grim.  Theorem \ref{MinimalityRef} shows
us that there is a poly-logarithmic minimality floor under which the prospect of 
giving efficient schemes seems unlikely; this is especially daunting when we 
consider that, in practice, we would like to see $\delta$-minimality for small, constant 
values of $\delta$, e.g. 1.25 or 1.5.

More positively, however, the structure of the problem of giving AMM-restricted 
CCSS constructions and how closely related it seems to the SCS problem 
does show some promise for relaxations which still allow resulting 
constructions to be useful in practice.  In particular, \cite{SCSResults} notes 
that simple greedy algorithms for the SCS problem yield solutions which 
are bad in the worst case but are nearly optimal in the expected case.  Toward 
the end of exploiting this similarity, we suggest the exploration of a new class
of AMM-restricted CCSS constructions: constructions in which minimality is viewed 
in the expected case.  Where we previously were married to the idea of giving an 
AMM-restricted $(f(\cdot), \cdot, \cdot)$-CCSS which is strictly $f(\cdot)$-minimal 
in all cases, we suggest considering AMM-restricted $(\mathbb{E}[f(\cdot)], \cdot, \cdot)$ schemes which is $f(\cdot)$-minimal in the \textbf{expected case}.  If the 
structural similarities with SCS hold to this point, it may be possible that 
greedy approaches perform well in the expected case.

Due to the scope of this report and in the interest of being able to touch 
a broader range of issues related to MM-restricted CCSS constructions in 
general, we leave both the formulation of $\mathbb{E}[f(\cdot)]$-minimality 
and the presentation of algorithms fitting its definition open. 


\subsection{Restricting Schematics to General Moore Machines}

The reader may recall from section 3.2.1.3 that we saw a method of reducing 
AMM schematic size by merging states and introducing new paths.  In the rest of this 
report, we do away with the acyclicity requirement.  Now that we have the 
ability to introduce cycles into the graph underlying our schematics, we may 
in fact reduce schematic size even further.  Anecdotally, we illustrate how 
the addition of a single self-loop reduces the size of the schematic depicted 
in Figure 2 from 9 to 7:

\begin{figure}[H]
    \centering
    \includegraphics{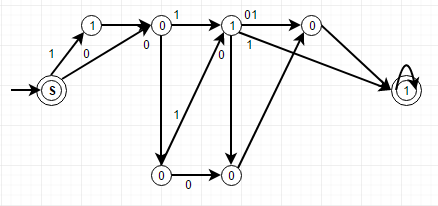}
    \caption{Schematic with Loop}
    \label{fig3}
\end{figure}

In truth, the ability to include cycles in our schematics does more than give 
us the ability to reduce schematic size: not even the same hardness properties no 
longer hold.  For example, Theorem \ref{UnboundedHardness} showed that 
schematic minimization is NP-hard for unbounded code lengths in the AMM-restricted
case; the relaxed MM-restriction contradicts this directly, as invariably 
something similar to one of the following schematics will be optimal:

\begin{figure}[H]
    \centering
    \includegraphics{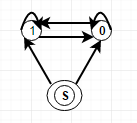}
    \caption{Cyclic Schematic for Degenerate Case, Unbounded Code Length}
    \label{fig4}
\end{figure}

\noindent Above, we see that an NP-hard instance of CCSS construction problem in the 
AMM restriction becomes the degenerate case for the general MM restriction.  
Owing to potentially significant savings in schematic size and this 
significant gap in hardness, we in this section begin to explore the prospect of 
MM-restricted CCSS constructions.  In the interest of adhering to the expected 
scope of this report, this exploration is limited to an informal presentation of 
an equivalent formulation meant to motivate future study of the problem.

\subsection{An Equivalent Formulation of the Construction Problem}

We denote by the \textit{construction problem} the task of designing the 
$Construct$ algorithm of a CCSS.  In this section, we informally present an 
equivalent formulation of the MM-restricted construction problem as a bicriteria 
partitioning problem in directed graphs.  We then use this formulation as a 
basis of discussing directions for continued work.
\newline\newline
As stated, we equivalently pose the MM-restricted construction problem as a 
partitioning problem in directed graphs.  The input for such a problem is a 
directed acyclic graph $F=(V,E)$, an associated labeling function $L(u)$ which 
maps each node in $V$ to a color in $\{0,1\}$, a natural number parameter $k$, 
a natural number parameter $\tau$,  and a set of \textit{trace-paths} 
$P_1,...,P_m$ through $F$.  The output desired by this problem is a disjoint 
monochromatic (as in each partition is monochromatic) 
$k$-partitioning $G_1,...,G_k$ of $F$ which satisfies $\tau$ in the following specific sense:

\begin{itemize}
\item{For each trace path $P_i$, define the \textit{partition-path} of 
$P_i$, $p(P_i)$, to be the sequence of partitions reached along the path $P_i$.}
\item{For each partition $G_i$, define the degree of $G_i$, $deg(G_i)$ to be the 
number of distinct partitions $G_j$ such that $G_j$ is the successor of $G_i$ 
along some partition path.}
\item{We say that a partition $G_1,...,G_k$ \textit{satisfies} $\tau$ if 
and only if 
$\max_{P_i} \sum_{j=1}^{\lvert P_i \rvert} \lceil \log{deg(p(P_i)[j])} \rceil \leq \tau$.}
\end{itemize}  

We omit a formal proof of the correspondence between this formulation and the 
construction problem for the sake of maintaining the scope of this report, 
but we summarize it informally.  In the above formulation, $k$ corresponds to 
the number of states in a schematic, and the parameter $\tau$ corresponds 
to the maximum streaming code length.  The quantity which must satisfy $\tau$ corresponds precisely to the length of the longest streaming code.  
Thus, to give an MM-restricted 
$(0, p(n,z), 0)$-CCSS, we would solve the given problem for the minimum $k$ yielding
a partition satisfying $p(n,z)$.  Again, we note that we leave the formal proof of this correspondence open.
\footnote{Indeed, the problem which corresponds exactly requires a slight modification 
in that the input graph must support having a constant number of unlabeled nodes, 
but we leave this out for ease of exposition.}
\newline\newline
This formulation is useful because it gives us a natural framework in which to 
develop schematic minimization algorithms, but it is also useful for the sake 
of analyzing hardness as we did for the AMM-restricted schemes.  We can 
see this, for example, even in an ability to view hardness of the AMM-restricted case through the 
lens of this formulation: if we impose a partial acyclicity constraint, 
namely dictating that there must not exist edges with start- and end-points in the same partition, 
the above formulation reduces to the problem of determining the chromatic 
number of an arbitrary undirected graph (when $\tau$ is unbounded).

Focusing once more on the MM-restricted case, our hope is that this formulation 
makes it possible to analyze the limits of MM-restricted schemes in a natural way.
We leave the problem of this analysis open, but we note that other similar 
partitioning problems \cite{PartitionHard} are both hard to solve exactly 
and approximately.  We additionally hope that this formulation simplifies the 
task of developing specific MM-restricted CCSS constructions, but we leave also 
this problem open.

\section{Future Work}

We have in this work explored the surface of corpus-compressed streaming 
as a solution to the Spotify problem, substantiating a particular strategy 
which utilizes regular function automata as schematics.  Even with respect to 
this strategy, our concrete results are limited to the development of a single
AMM-restricted $(0, \lceil \log{n} \rceil, 0)$-CCSS and a few hardness results for 
this specific restriction.  While our discussion has touched on extensions of this 
strategy, our work nonetheless leaves open quite a few questions which seem 
worthwhile to explore in the future.  We outline these areas for future work below.

\begin{itemize}
\item{\textit{Lossy Reconstruction } All of our schemes, hardness results, and 
discussion thus far have been restricted to consideration of lossless schemes 
having the $\epsilon$ parameter fixed at 0.  Future work could consider 
developing new lossless CCSS constructions, extending the AMM-restricted 
scheme we've developed to make strategic use of loss, or even seeking to extend or 
refine our hardness results in the lossless case.}
\item{\textit{Expected Minimality } In our presentation of results for AMM-restricted 
schemes, we noted the possible utility of developing a notion of expected-case 
minimality rather than strict minimality of a CCSS.  Future work could look at 
giving a precise formulation of this notion and, moreover, using it to 
construct practically useful schemes with $p(n,z)$ varied away from 
$\lceil \log{n} \rceil$.}
\item{\textit{Taking Other Parameters in Expectation } Related to the previous point, it 
seems natural to also consider the benefit of also taking the maximum streaming code 
length and loss parameters in expectation.}
\item{\textit{MM-restricted Schemes } The last section gave an overview of the 
benefits of studying MM-restricted constructions.  Future work in this area could 
look at formally proving properties of these constructions (potentially using the 
equivalent formulation we have provided) or, perhaps more importantly, 
developing MM-restricted schemes.}
\item{\textit{Schemes from Other Function Automata } This work has focused 
exclusively on constructing schematics from regular function automata.  While 
the hardness results we found at this level of the hierarchy may be a 
deterrent from any attempt at operating at a higher level, studying what happens 
when we do could nonetheless be worthwhile.}
\item{\textit{Other Restriction Constraints } Related to the previous, it would 
of course be worthwhile to consider other means of restricting schematics independent 
of language considerations.  One particular area of interest could be exploring 
restrictions of schematics to auto-encoders with specific properties.}
\item{\textit{Practical Evaluation of Schemes } Outside of the realm of theory, one
major area of future work concerns the implementation and evaluation of CCSS 
constructions on real-world data and in real-world environments.}
\end{itemize}

\section{Conclusion}

In this work, we have defined the \textbf{Spotify problem} and explored a 
potential solution in the form of a new algorithmic goal which 
we refer to as \textbf{corpus-compressed streaming schemes}.  After formally substantiating 
the notion of a corpus-compressed streaming scheme, we explored a specific strategy of 
constructing them based upon a Kolmogorov-like use of regular function automata as an `almost self-extracting'
archive.  Following a presentation of results including a concrete, nearly optimal 
scheme (under a specific restriction) and hardness properties, we further motivate and outline 
opportunities for future work in this area.

\bibliographystyle{acm}
\bibliography{proposal}

\end{document}